\documentclass[aps,pra,twocolumn,amsfonts,amssymb,amsmath,showpacs,
floatfix,nofootinbib,citesort]{revtex4-2}
\usepackage{mathrsfs}
\usepackage{amsfonts}
\usepackage{amstext}
\usepackage{amsmath}
\usepackage{amssymb}
\usepackage{bm}
\usepackage{CJK}
\usepackage{mathtools}
\usepackage{amssymb}
\usepackage{amsthm}\usepackage{bbm}
\usepackage[dvips]{graphicx}
\def\qed{\leavevmode\unskip\penalty9999 \hbox{}\nobreak\hfill
	\quad\hbox{\leavevmode  \hbox to.77778em{%
			\hfil\vrule   \vbox to.675em%
			{\hrule width.6em\vfil\hrule}\vrule\hfil}}
	\par\vskip3pt}

\usepackage{amssymb}
\usepackage{graphicx}
\usepackage{graphics}
\usepackage{amsmath}
\usepackage{amsthm}
\usepackage{color}
\usepackage{dsfont}
\usepackage{textcomp}
\definecolor{darkred}  {rgb}{0.5,0,0}
\definecolor{darkblue} {rgb}{0,0,0.5}
\definecolor{darkgreen}{rgb}{0,0.5,0}
\usepackage{hyperref}
\hypersetup{
	pdftitle = {QRT Proposal},
	pdfauthor = {},
	colorlinks = true,
	urlcolor  = blue,         
	linkcolor = red,     
	citecolor = blue,    
	filecolor = darkred       
}
\usepackage{mathtools}
\def\ra{\rangle}
\def\la{\langle}

\def\ot{\otimes}
\newtheorem{theorem}{Theorem}
\newtheorem{conjecture}{Conjecture}

\newtheorem{pro}{Proposition}

\newcommand{\bea}{\begin{eqnarray}}
	\newcommand{\eea}{\end{eqnarray}}
\newcommand{\be}{\begin{equation}}
	\newcommand{\ee}{\end{equation}}
\newcommand{\ba}{\begin{equation}\begin{aligned}}
		\newcommand{\ea}{\end{aligned}\end{equation}}
	\newcommand{\bax}{\begin{equation*}\begin{aligned}}
			\newcommand{\eax}{\end{aligned}\end{equation*}}
\newcommand{\rank}{\text{rank}}

\newcommand{\beax}{\begin{eqnarray*}}
	\newcommand{\eeax}{\end{eqnarray*}}
\newcommand{\bex}{\begin{equation*}}
	\newcommand{\eex}{\end{equation*}}

\newtheorem{definition}{Definition}
\theoremstyle{remark}
\newtheorem{remark}{Remark}

\def\be{\begin{equation}}
	\def\ee{\end{equation}}

\newcommand{\mE}{\mathcal{E}}
\newcommand{\mC}{\mathcal{C}}

\newcommand{\mI}{\mathcal{I}}
\newcommand{\vmI}{\check{\mathcal{I}}}
\newcommand{\mH}{\mathcal{H}}

\newcommand{\mP}{\mathcal{P}}

\newcommand{\mR}{\mathcal{R}}

\newcommand{\mS}{\mathcal{S}}

\newcommand{\tr}{{\rm Tr}}

\newcommand{\mbb}[1]{\mathbb{#1}}

\newcommand{\ket}[1]{|#1\rangle}

\newcommand{\mbR}{\mathbb{R}}

\newcommand{\rIm}{{\rm Im}}
\newcommand{\rmi}{{\rm i}}
\newcommand{\bfr}{{\bf r}}
\newcommand{\bfc}{{\bf c}}
\newcommand{\bfp}{{\bf p}}
\newcommand{\bfq}{{\bf q}}
\newcommand{\bfu}{{\bf u}}
\newcommand{\bfv}{{\bf v}}
\newcommand{\bfs}{{\bf s}}
\newcommand{\bfx}{{\bf x}}

\begin{document}

\preprint{APS/123-QED}
\begin{CJK*}{GB}{gbsn}
\title{Measure of set imaginarity\\}
\author{Yu Guo$^{1,2}$}
\email{guoyu3@aliyun.com}
\author{Jiabo Pan$^{1,2}$}
\author{Yuqin Wang$^{1,2}$}
\author{Shuanping Du$^{3}$}

\affiliation{$^{1}$School of Mathematical Sciences, Inner Mongolia University, Hohhot, Inner Mongolia 010021, People's Republic of China}
\affiliation{$^{2}$Inner Mongolia Key Laboratory of Mathematical Modeling and Scientific Computing, Inner Mongolia University, Hohhot, Inner Mongolia 010021, People's Republic of China}
\affiliation{$^{3}$School of Mathematical Sciences, Xiamen University, Xiamen, Fujian, 361000, People's Republic of China}


\begin{abstract}
Recent studies have shown that Bargmann invariants provide effective detectors of set imaginarity. In this paper, we investigate set imaginarity as a quantum resource in qubit systems. By exploiting the structure of Bargmann invariants, we show that the free operations for qubit set imaginarity consist precisely of common unital operations and common planarized operations. Based on this characterization, we introduce an axiomatic framework for set-imaginarity measures (SIMs). In particular, we propose two refined notions, namely unified SIMs and complete SIMs, which allow a more fine-grained quantification of set imaginarity. To make these notions concrete, we construct two qubit SIMs from the Bargmann invariants of three-state subsets. We prove that one of them is a unified SIM, while the other satisfies the stronger requirements of a complete SIM. Furthermore, we revisit the robustness of set imaginarity previously introduced in the literature. We show that, although this robustness is a valid SIM for qubit systems, it is neither a unified SIM nor a complete SIM. To overcome this limitation, we propose an improved robustness-type measure and rigorously prove that it defines a complete qubit SIM.

\end{abstract}

\maketitle
\end{CJK*}


\section{Introduction}


Quantum resource theory has traditionally focused on the characterization and manipulation of individual quantum states or quantum channels~\cite{Chitambar2019rmp,Gour2024arxiv,Coecke2016ic,Guo2025arxiv} and has developed into a well-established framework within quantum information science. However, this single-object paradigm becomes insufficient
for describing a wide range of intrinsically collective phenomena in quantum physics. In such settings, as exemplified by quantum pseudorandomness~\cite{Ji2018,Kretschmer2021,Bansal2025prx}, quantum non-Markovianity~\cite{Rivas2014rpp}, and linear-optical multiport interferometry~\cite{Shchesnovich2015pra,Shchesnovich2018pra,Jones2020,Jones2023pra}, a proper physical description requires considering sets or structured families of quantum objects rather than isolated instances.

Consequently, quantum resources associated with sets of states  have recently emerged as an active topic of research~\cite{Designolle2021prl,Miyazaki2022q,Sajjan,Buscemi2020prl,Uola2019prl,Martins2020pra,Ducuara2020prr,Selby2023prl,Wagner2024pra,Galvao2020pra,Zhang2025arxiv,Zhang2025arxiv2,Gour2018}. Among the various developments, two notable examples are \emph{set coherence} and \emph{set imaginarity}, which extend the corresponding well-established single-state resource theories respectively. Set coherence was formally introduced by Designolle \emph{et al.}~\cite{Designolle2021prl}, although related notions had previously appeared under different terminologies~\cite{Horodecki2007pra,Galvao2020pra,Piani2014njp,Fuchs2003qic,Horodecki2006ijqi}. In parallel,  set imaginarity was developed by Miyazaki and Matsumoto~\cite{Miyazaki2022q}. A fundamental distinction between resource theories of individual states and their set-based counterparts lies in the role played by the underlying reference structure: While single-state resource theories typically rely on a fixed reference basis, set-based resource theories are often formulated in a basis-independent manner, with the relevant question being whether a common reference basis exists for the entire set. In particular, any set consisting entirely of mutually commuting states exhibits neither set coherence nor set imaginarity since all states in the set can be simultaneously diagonalized in a common basis.

The resource theory of imaginarity has been shown to possess a precise operational interpretation~\cite{WuKD2021prl,WuKD2021pra,WuKD2023arxiv}, indicating that imaginarity can be linked to concrete advantages in quantum information tasks. Furthermore, imaginarity has been demonstrated to provide advantages in a variety of quantum information tasks. These include state discrimination~\cite{WuKD2021prl,Herzog2002pra}, hiding and masking~\cite{ZhuH2021prr}, quantum machine learning~\cite{Sajjan},
pseudorandomness~\cite{Koh2023arxiv}, multiparameter quantum metrology~\cite{Carollo2018srep,Carollo2019jsm,Miyazaki2022q},
linear-optical implementations~\cite{Jones2023pra,Menssen2017prl,Shchesnovich2018pra}, Kirkwood-Dirac quasiprobabilities~\cite{Wagner2024qst,Budiyono2023pra1,Budiyono2023pra2,Budiyono2023jmp}, and weak-value amplification and related phenomena~\cite{Wagner2023pra,Kedem2012pra,Dixon2009prl,Hosten2008science,Brunner2010prl,Hofmann2011pra,Kunjwal2019pra}. These results indicate that the imaginary components of density matrices are not merely representational artifacts but may encode genuine physical and informational content.

To characterize such basis-independent imaginary features, the \emph{Bargmann invariant} provides a particularly useful tool.
Originally introduced by Bargmann in the study of symmetry transformations~\cite{Bargmann1964}, the $n$th order Bargmann invariant associated with $\vec{\rho}=\{\rho_1,\rho_2,\ldots,\rho_n\}$ is defined as
\bea
\Delta_n(\vec{\rho})
=
\tr(\rho_1\rho_2\cdots\rho_n),
\eea
where $\rho_j$s are density operator on a $d$-dimensional Hilbert space $\mH$ (the original Bargmann invariant refers to $\Delta_3(|\psi_1\ra,|\psi_2\ra,|\psi_3\ra)=\la\psi_1|\psi_2\ra\la\psi_2|\psi_3\ra\la\psi_3|\psi_1\ra$ for pure states $\{|\psi_1\ra, |\psi_2\ra, |\psi_3\ra\}$). A key property of the Bargmann invariant is its invariance under simultaneous unitary transformations, i.e.,
\bea
\Delta_n(\vec{\rho})
=
\Delta_n(U\vec{\rho}U^\dag),\quad\forall~\text{unitary}~U,
\eea
where $U\vec{\rho}U^\dag:=\{U\rho_1U^\dag,\ldots,U\rho_nU^\dag\}$, which makes it a natural candidate for detecting basis-independent features of a set of states. In particular, $\Delta_n(\vec{\rho})\notin\mathbb{R}$ implies that the set $\vec{\rho}$ necessarily has set imaginarity. Unlike pairwise overlaps, they encode higher-order correlations that are intrinsic to the configuration of the states, which have led to, in addition to detecting set imaginarity~\cite{Fernandes2024prl,LiTan2025}, a wide range of applications~\cite{Berry1984,Mukunda1993I,Menssen2017prl,Jones2020,Liang2023,Reascos2023,Wagner2023pra,Kirkwood1933,Dirac1945,Arvidsson2024}. Very recently, Li \textit{et al.}~\cite{LiMS2026pra} showed that a set of qubit states exhibits set imaginarity if and only if there exists a triplet of states whose third-order Bargmann invariant has a nonzero imaginary part (see Corollary III.7 in Ref.~\cite{LiMS2026pra}).

The conceptual framework of quantum resource theory is built upon three fundamental pillars:
(i) a specification of a set of free states, with states outside this set regarded as resource states;
(ii) a set of free operations, i.e., the resource nongenerating operations;
and (iii)  a rigorous quantitative measure for resource assessment.
Within the specific context of set imaginarity, current theoretical understanding has elucidated the structural characteristics of resource states through Bargmann invariants analysis. 

Nevertheless, a systematic characterization of the corresponding free operations continues to represent a significant gap in contemporary research. A fundamental axiomatic requirement dictates that any well-defined resource measure must exhibit monotonic nonincrease under admissible free operations. In this article, we aim to identify the class of free operations for set imaginarity in qubit systems and then to establish an axiomatic foundation for set-imaginarity measures (SIMs). Building upon the properties encoded in Bargmann invariants for qubit states, we propose two distinct SIMs. In addition, a rigorous verification is performed to assess if the SIMs so far meet all requisite conditions in our axiomatic definition. We also discuss other SIMs which are the renovations of the corresponding individual state imaginarity measures, respectively.

Throughout this paper, we denote by $\mS(d)$ the set of all quantum states acting on the system described by the Hilbert space $\mH$ with $\dim\mH=d\geqslant2$. For a set of states $\vec{\rho}$, we denote by $|\vec{\rho}|$ the cardinal number of $\vec{\rho}$.  $\mP_{\mS(d)}$ denotes the power set of $\mS(d)$.


\section{Preliminary}


\subsection{Individual state imaginarity}

Recall that, in the resource theory of imaginarity, the free states are {\it real states}, i.e., states with a real density matrix $\langle m |\rho |n\rangle\in{\mathbb R}$ under a given reference basis $\{|m\ra\}_{m=0}^{d-1}$ of $\mathcal H$. The corresponding free operation is called {\it real operation}, which can be represented as $\mE(\rho) =\sum_j K_j\rho K_j^{\dag}$ with the
Kraus operators $K_j$s satisfy $\langle m|K_j|n\rangle\in{\mathbb R}$. A nonnegative function ${\mathcal I}$ on $\mS(d)$ is
called an imaginarity measure if it admits the following conditions (I1) to (I4)~\cite{HickeyGour2018,WuKD2021pra,WuKD2021prl}: 

(I1) Non-negativity: $\mathcal I (\rho) \geqslant 0$, and $\mathcal I (\rho) = 0$ for any real state $\rho$. 

(I2) Monotonicity: $\mathcal I (\mE(\rho))\leqslant  \mathcal I (\rho)$ whenever $\mE$ is a real operation. 

(I3) Probabilistic monotonicity: $\sum_j p_j{\mathcal I}(\rho_j)\leqslant {\mathcal I}(\rho)$ for any set of real operators $\{K_j\}$ satisfying $\sum_j K_j^{\dag}K_j=I$, where $p_j=\tr (K_j\rho K_j^{\dag})$, $\rho_j=\frac {1}{p_j}K_j\rho K_j^{\dag}$. 

(I4) Convexity: ${\mathcal I}(\sum_jp_j\rho_j)\leqslant \sum_jp_j{\mathcal I}(\rho_j)$ for any ensemble $\{p_j,\rho_j\}$. 
Note that (I3) and (I4) together imply (I2). In Ref.~\cite{Xue2021qip}, another condition was discussed, i.e., 

(I5) Additivity for direct sum states: ${\mathcal I}(p\rho_1 \oplus (1-p)\rho_2) = p{\mathcal I}(\rho_1)+(1-p){\mathcal I}(\rho_2)$,~$0<p<1$. (I3) and (I4) are equivalent to (I2) and (I5)~\cite{Xue2021qip}.

Several measures of imaginarity have been proposed in literature~\cite{HickeyGour2018,WuKD2021pra,WuKD2021pra,Kondra,Xu,Xu2}.
For an individual quantum state, robustness is typically defined as the minimum proportion of a noise state (or an arbitrary state) that must be mixed with the given state to transform it into a ``free state''~\cite{HickeyGour2018,WuKD2021pra}, i.e.,
\be
\mI_R(\rho)=\min_\tau\left\{s\geqslant0:\frac{\rho+s\tau}{1+s}\in\mR\right\},
\ee
where $\mR$ denotes the set of all real states under the reference basis. For any metric $\mC$ that is contractive under CPTP maps,
\beax
\mI_C(\rho)=\min\limits_{\sigma\in\mR}\mC(\rho,\sigma)
\eeax
is also an imaginarity measure~\cite{HickeyGour2018}. This includes the the trace distance measure of imaginarity~\cite{HickeyGour2018}, relative entropy of imaginarity~\cite{Xue2021qip}, and the geometric measure of imaginarity~\cite{Kondra}. In particular, the trace distance is such a case, which is defined by
\be
\mI_\tr(\rho)=\min\limits_{\sigma\in\mR}\|\rho-\sigma\|_{\tr}.
\ee
It was shown in Refs.~\cite{WuKD2021prl,WuKD2021pra} that $\mI_\tr$ coincides with $\mI_R$ indeed.

\subsection{Set imaginarity}

A set of states
\beax
\vec{\rho}=\{\rho_j: 1\leqslant j\leqslant n\}\subseteq \mS(d)
\eeax
is called \emph{set imaginarity free} if there exists a unitary operator $U$ acting on $\mH$ such that $U\rho_jU^\dag$ is real with respect to some basis $\{|i\rangle\}$ for all $j$~\cite{Miyazaki2022q,LiMS2026pra}. Otherwise, we say that the set $\vec{\rho}$ has set imaginarity. It was shown in Ref.~\cite{LiMS2026pra} that any pair of states $\{\rho_1, \rho_2\}\subseteq\mS(d)$ is set imaginarity free.
So we always assume throughout this paper that $n\geqslant3$ unless otherwise specified. We denote by $\check{\mR}_n$ the set of all set imaginarity free states $\vec{\rho}$ with $|\vec{\rho}|=n$. The set of all real states under a given reference basis is convex and affine in the sense that it is closed under all affine combinations that remain valid density operators~\cite{HickeyGour2018}, but $\check{\mR}_n$ is not convex in general~\cite{LiMS2026pra}.

Li \emph{et al.} defined the robustness of set imaginarity  $\vmI_{R}(\vec{\rho})$ by~\cite{LiMS2026pra}
\be
\vmI_{R}(\vec{\rho})=\min_U\frac{1}{n}\sum_{j=1}^{n}\mI_R(U\rho_jU^\dagger)
\ee
with respect to the given reference basis, where the minimization is taken over all unitary operators $U$.
Miyazaki and Matsumoto~\cite{Miyazaki2022q} defined mean-distance
\be
\vmI_C(\vec{\rho}) = \min\limits_{U} \frac{1}{n} \sum\limits_{j=1}^{n} \mI_C(U\rho_jU^\dag),
\ee
and max-distance
\be
\vmI'_C(\vec{\rho}) = \min\limits_{U} \max\limits_{j} \mI_C(U\rho_jU^\dag),
\ee
in the same manner.
When we take the trace distance in $\mI_C$, $\vmI_{R}$ is equivalent to $\vmI_C$.
Hereafter, for any set of qudit states $\vec{\rho}=\{\rho_1,\rho_2,\dots,\rho_n\}$, we write
\bea\label{I_n}
\vmI(\vec{\rho}) = |\rIm \Delta_n(\vec{\rho})|=|\rIm\tr(\rho_1\rho_2\cdots \rho_n)|
\eea
and
\beax
\mE(\vec{\rho})=(\mE(\rho_1), \mE(\rho_2),\cdots,\mE(\rho_n)),
\eeax
where $\mE$ is a quantum channel acting on $\mS(d)$.


\section{The free operation of set imaginarity for qubit system}


Recall that every qubit state $\rho\in\mS(2)$ admits the Bloch representation, i.e.,
\bea\label{bloch}
\rho=\frac{1}{2}(I+\bf{r}\cdot{\bm{\sigma}}),
\eea
where $\bm{\sigma}=(\sigma_{x}, \sigma_{y}, \sigma_{z})$ is the vector of Pauli operators, ${\bf{r}}=(r_x, r_y, r_z)$ is the Bloch vector of $\rho$, $\|\bf{r}\|\leqslant 1$.
Given a set of states $\vec{\rho}=\{\rho_j: 1\leqslant j\leqslant n\}\subseteq\mS(2)$, let $\bfr_i=(r_{x(i)}, r_{y(i)}, r_{z(i)})$ be the Bloch vector of $\rho_i\in\vec{\rho}$. Corollary III.7 in Ref.~\cite{LiMS2026pra} means that $\vec{\rho}$ lying within a plane defined by a great circle of the Bloch sphere is imaginarity free and vice versa.
For any quantum operation $\mE$ on $\vec{\rho}$, we write $\rho_i'=\mE(\rho_i)$ with the corresponding Bloch vector given by
\bea\label{vector}
{\bfr_i'}=T\bfr_i+\bf{c},
\eea
where $\bf{c}\in{\mathbb R}^3$ and $T$ is a $3\times3$ real matrix~\cite{Pasieka2009,HickeyGour2018}. It is clear that $\bf{c}=0$ if and only if $\mE$ is unital. Here we say a quantum channel $\mE$ is unital if $\mE(I)=I$. If $\mE\left[\mS(2)\right]$ is a plane that contains $I/2$, then $\mE(\vec{\rho})$ must be set imaginarity free. We show below that, these two kinds of operations are free operation of qubit set imaginarity and vice versa. For simplicity, we call $\mE$ a \emph{planarized operation} if $\mE\left[\mS(2)\right]$ is a plane that contains $I/2$ in the Bloch sphere. Namely, the term ``planarized operation'' refers to the fact that the image of the Bloch ball is contained in an affine subspace of dimension at most two. In particular, this includes the degenerate case where the image reduces to a line segment [$\operatorname{rank}(T)=1$] or a single point [$\operatorname{rank}(T)=0$].

\begin{theorem}\label{theorem1}
The free operation of qubit set imaginarity is either any common unital operation or any common planarized but nonunital operation.
\end{theorem}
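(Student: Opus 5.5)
The plan is to work entirely in the Bloch picture. By Corollary III.7 of Ref.~\cite{LiMS2026pra}, as restated above, a set $\vec{\rho}\subseteq\mS(2)$ is set imaginarity free if and only if all its Bloch vectors lie in a common plane through the origin, i.e.\ they span a linear subspace of $\mathbb{R}^3$ of dimension at most two. Equivalently, every triple has vanishing triple product: a direct computation with Pauli matrices gives $\rIm\tr(\rho_1\rho_2\rho_3)=\tfrac14\,\bfr_1\cdot(\bfr_2\times\bfr_3)$. A common channel $\mE$, acting as $\bfr\mapsto T\bfr+\bfc$ by Eq.~(\ref{vector}), is \emph{free} if it sends every free set to a free set. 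I will prove that this happens if and only if either $\bfc=0$ (unital), or the affine image $T\mathcal{B}+\bfc$ of the Bloch ball $\mathcal{B}$ lies in a plane through the origin (planarized). The condition $n\geqslant3$ causes no trouble: any free triple can be padded to a free set of any size $n$ by adding further distinct states whose Bloch vectors lie in the same plane.

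For sufficiency there are two cases. If $\bfc=0$, a free set has Bloch vectors in some $2$-dimensional subspace $P$, so the images lie in $TP$, a linear subspace of dimension at most $2$; hence the image set is free. If $\mE$ is planarized, every output Bloch vector lies in a fixed plane through the origin, whatever the input, so every image set is free.

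For necessity, I argue by contraposition. Suppose $\bfc\neq0$ and $\mE$ is not planarized. Then the linear span of $\{T\bfr+\bfc:\bfr\in\mathcal{B}\}$ is all of $\mathbb{R}^3$. Since this span equals $\mathrm{span}(\bfc)+T\mathbb{R}^3$, we get $\rank(T)\geqslant2$, and in the case $\rank(T)=2$ also $\bfc\notin T\mathbb{R}^3$. (If $\rank(T)\leqslant1$, the span has dimension at most $2$, so $\mE$ is automatically planarized; this is the degenerate case mentioned before the theorem.) I then build a free triple whose image is not free:
\begin{itemize}
\item Take $\bfr_3=0$, the maximally mixed state, whose image Bloch vector is $\bfc$.
\item If $\rank(T)=2$, let $P$ be any $2$-dimensional complement of $\ker T$, so that $TP=T\mathbb{R}^3$. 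Choose $\bfr_1,\bfr_2\in P$ small enough to lie in $\mathcal{B}$ and with $T\bfr_1,T\bfr_2$ linearly independent.
\item If $\rank(T)=3$, pick a $2$-dimensional subspace $W$ with $\bfc\notin W$, set $P=T^{-1}W$, and choose small, linearly independent $\bfr_1,\bfr_2\in P$.
\end{itemize}
In both cases $\{\bfr_1,\bfr_2,0\}$ is coplanar with the origin, so the triple is free. The images are $T\bfr_1+\bfc$, $T\bfr_2+\bfc$ and $\bfc$, and their span equals $\mathrm{span}(T\bfr_1,T\bfr_2,\bfc)=\mathbb{R}^3$. Hence the triple product is nonzero, $\rIm\Delta_3\neq0$, and the image set has set imaginarity.

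The main obstacle is the bookkeeping in this necessity step. One must correctly translate ``not planarized and nonunital'' into the linear-algebra condition $\mathrm{span}(\bfc)+T\mathbb{R}^3=\mathbb{R}^3$, keeping in mind that planes must pass through $I/2$. One must also check that the chosen vectors can be rescaled to remain inside the Bloch ball, which is harmless because rescaling preserves linear independence.
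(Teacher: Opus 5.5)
Your proof is correct and follows essentially the same route as the paper. Both use the Bloch-vector form of Corollary III.7 via triple products. Both check directly that unital and planarized channels keep coplanar sets coplanar. For a nonunital channel with $\rank(T)=3$, or with $\rank(T)=2$ and $\bfc\notin R(T)$, both exhibit a coplanar input triple whose image has nonzero determinant; your choice $\bfr_3=0$ is just the $\alpha=\beta=0$ instance of the paper's construction, which only requires $\alpha+\beta\neq1$.

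The only extra ingredient in the paper is an example showing that applying \emph{different} free channels to different states can create set imaginarity. That example motivates the word ``common'', which you build directly into your setup.
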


\begin{proof}
In quantum resource theory, the free operation is the quantum channel that cannot convert free states into resource states~\cite{HickeyGour2018,Brandao2015prl}. It is also called resource nongenerating operation~\cite{HickeyGour2018,Brandao2015prl}. It is clear that, any common unitary operation is a free operation of qubit set imaginarity. We show below that free operation of qubit set imaginarity can also be other operations that contain unitary operation as a special case.

By Corollary III.7 in Ref.~\cite{LiMS2026pra}, we begin with the set consisting of three states, i.e., $\vec{\rho}=\{\rho_1,\rho_2,\rho_3\}$. It turns out that
\beax
\vmI(\vec{\rho})&=&|\rIm\tr(\rho_1\rho_2\rho_3)| =\frac{1}{4}|\det(\bfr_{1}, \bfr_{2}, \bfr_{3})|,\\
\vmI\left[\mE(\vec{\rho})\right]&=&|\rIm\tr(\rho_{1}'\rho_{2}'\rho_{3}')|=\frac{1}{4}|\det(\bfr_{1}', \bfr_{2}', \bfr_{3}')|,
\eeax
where $\bfr_{i}$ and $\bfr_{i}'$ denotes the Bloch vectors of $\rho_i$ and $\rho'_i$, respectively.
There are two different cases: (i) $\bf{c}=\bf{0}$ and (ii) $\bf{c}\neq\bf{0}$.

Case 1: $\bf{c}=\bf{0}$. In such a case, the center of the sphere is fixed under $\mE$. Obviously, $\|{\bf{r'}}\|=\|T{\bf{r}}+{\bf{c}}\|=\|T{\bf{r}}\|\le1$, so $\|T\|\le1$, and thus $|\!\det T|\le1$. It follows that
\beax
\left|\det(\bfr_{1}', \bfr_{2}', \bfr_{3}')\right|
=|\det T || \det(\bfr_{1}, \bfr_{2}, \bfr_{3})|
\le|\det(\bfr_{1}, \bfr_{2}, \bfr_{3})|,
\eeax
that is,
\bex
|\rIm\tr(\rho_{1}'\rho_{2}'\rho_{3}')|\leqslant|\rIm\tr(\rho_1\rho_2\rho_3)|.
\eex
Notice that $\bf{c}=\bf{0}$ if and only if $\mE$ is unital. We now conclude, based on Corollary III.7 in Ref.~\cite{LiMS2026pra}, that any common unital operation is a free operation of qubit set imaginarity.

Case 2: $\bf{c}\neq\bf{0}$. If $\operatorname{rank}(T)=3$, then we can choose $\bfu,\bfv\in\mathbb R^3$ such that
\bex
\det(\bfu,\bfv,\bfc)\neq 0.
\eex
Define
\bex
\bfr_1=\varepsilon T^{-1}\bfu,~
\bfr_2=\varepsilon T^{-1}\bfv,~
\bfr_3=\varepsilon(\alpha T^{-1}\bfu+\beta T^{-1}\bfv),
\eex
where $\alpha+\beta\neq 1$ and $\varepsilon>0$ is sufficiently small so that all three vectors lie in the Bloch ball. Then $\bfr_1,\bfr_2,\bfr_3$ are coplanar, but
\bex
\det(\bfr_1',\bfr_2',\bfr_3')=(1-\alpha-\beta)\varepsilon^2 \det(\bfu,\bfv,\bfc)\neq 0.
\eex
Namely, any nonunital operation with $\operatorname{rank}(T)=3$ is not a free operation of the qubit set imaginarity.

We now discuss the case of $\operatorname{rank}(T)= 2$. We denote by $R(T)$ the range of $T$. If $\bfc \in R(T)$, then we let $\bfr_1,\bfr_2,\bfr_3$ be any coplanar triple of Bloch vectors. With no loss of generality, we assume that
\[
\bfr_3 = \alpha \bfr_1 + \beta \bfr_2
~\text{for some } \alpha,\beta \in \mathbb{R}.
\]
Under the action of $\mE$,
\[
\bfr_i' = T\bfr_i+\bfc,\quad i=1,2,3,
\]
and hence
\[
\bfr_3'=\alpha T\bfr_1+\beta T\bfr_2+\bfc.
\]
It turns out that
\beax
&&\det(\bfr_1',\bfr_2',\bfr_3')\\
&=& \det(T\bfr_1+\bfc,\; T\bfr_2+\bfc,\; \alpha T\bfr_1+\beta T\bfr_2+\bfc)\\
&=&(1-\alpha-\beta)\det(T\bfr_1,T\bfr_2,\bfc)
=0.
\eeax
That is, every coplanar input triple is mapped to a coplanar output triple, and thus $\mE$ is a free operation.

If $\bfc\notin R(T)$, then we can choose linearly independent vectors $\bfu,\bfv\in R(T)$ such that
\[
\det(\bfu,\bfv,\bfc)\neq 0.
\]
Since $\bfu,\bfv \in R(T)$, there exist $\bfs_1,\bfs_2 \in \mathbb{R}^3$ such that
\[
T\bfs_1 = \bfu, ~ T\bfs_2 = \bfv.
\]
Define
\[
\bfr_1=\varepsilon \bfs_1,~
\bfr_2=\varepsilon \bfs_2,~
\bfr_3=\varepsilon(\alpha \bfs_1+\beta \bfs_2),
\]
with $\alpha+\beta\neq 1$ and $\varepsilon>0$ sufficiently small. Then the input triple is coplanar, but
\[
\det(\bfr_1',\bfr_2',\bfr_3')
=(1-\alpha-\beta)\varepsilon^2 \det(\bfu,\bfv,\bfc)\neq 0.
\]

If $\operatorname{rank}(T)\leqslant 1$, then in this case $\dim R(T)\leqslant 1$, and thus for any $\bfr_1,\bfr_2$, the vectors $T\bfr_1$ and $T\bfr_2$ are linearly dependent. Therefore,
\[
\det(T\bfr_1,T\bfr_2,\bfc)=0
\]
for all $\bfc$, and hence $\mE$ is a free operation.

In addition, if a set of free operations $\vec{\mE}=\{\mE_i\}$ acts on $\vec{\rho}$,
\bex
\vec{\mE}(\vec{\rho})=\left(\mE_1(\rho_1),\mE_2(\rho_2),\dots,\mE_n(\rho_n)\right),
\eex
we show below that $\vec{\mE}$ can create set imaginarity, namely, any set of free operations is not free operations any more.

Let $\vec{\rho}=\{\rho_1, \rho_2, \rho_3\}$ with the Bloch vectors are
\bex
\bfr_1=\frac1{\sqrt2}(1,0,-1), ~\bfr_2=\frac1{\sqrt2}(0,1,-1), ~\bfr_3=\frac1{\sqrt2}(1,-1,0),
\eex
respectively. Then $\det(\bfr_1,\bfr_2,\bfr_3)=0$, and thus
\bex
\vmI(\vec{\rho})=0.
\eex
Let $\mE_1$ be the phase-flip channel with $p=\frac{1}{2}$, and then
\bex
T = \begin{pmatrix}
	0 & 0 & 0 \\
	0 & 0 & 0 \\
	0 & 0 & 1
\end{pmatrix},
\quad \bf{c} = \begin{pmatrix}
	0 \\ 0 \\ 0
\end{pmatrix}
\eex
with respect to the Bloch representation. Taking
\bex
\mE_2=\mE_3=\mathbbm{1},
\eex
where $\mathbbm{1}$ denotes the identity operation, the output Bloch vectors are
\beax
\bfr_1'&=&\frac1{\sqrt2}(0,0,-1),\\
\bfr_2'&=&\frac1{\sqrt2}(0,1,-1), \\
\bfr_3'&=&\frac1{\sqrt2}(1,-1,0),
\eeax
respectively. We get
\beax
\vmI[\mE(\vec{\rho})]=|\rIm\tr(\rho_1' \rho_2' \rho_3')|
=\frac{\sqrt2}{16}>\vmI(\vec{\rho})=0.
\eeax
In general, for any $\vec{\rho}=\{\rho_1, \rho_2, \dots, \rho_n\}$ with $n\geq3$, we assume that it is set imaginarity free, and then they lie within a plane defined by a great circle of the Bloch sphere according to Ref.~\cite{LiMS2026pra}. We can always find different ``rotations'' induced by unital operations or planarized operations such that these points after rotating are not lying in a same plane any more. 
\end{proof}

The unital operations cover a wide range of qubit channels, since any unitary operation is unital, and in addition, such as the phase-flip channel $\mE(\rho)=(1-p)\rho+p\sigma_{z}\rho\sigma_{z}$, the bit-flip channel $\mE(\rho)=(1-p)\rho+p\sigma_{x}\rho\sigma_{x}$, and the depolarizing channel $\mE(\rho)=(1 - p)\rho + \frac{p}{3}(\sigma_x\rho\sigma_x + \sigma_y\rho\sigma_y + \sigma_z\rho\sigma_z)$ are also unital.


\section{The axiomatic definition of set-imaginarity measure}


Any pair of states $\{\rho_1, \rho_2\}$ is always set imaginarity free, whereas the enlarged set $\{\rho_1,\rho_2,\rho_3\}$ may exhibit set imaginarity. In general, for any sets $\vec{\rho}$ and $\vec{\varrho}$, if $\vec{\rho}\subseteq\vec{\varrho}$, the imaginarity is increasing intuitively when $\vec{\rho}$ becomes $\vec{\varrho}$ by adding states. In addition, if $\vec{\varrho}\cap\vec{\rho}=\emptyset$, the combined set $\vec{\varrho}\cup\vec{\rho}$ contains both $\vec{\varrho}$ and $\vec{\rho}$. As a resource, the `amount' of imaginarity in $\vec{\varrho}\cup\vec{\rho}$ should be not less than that of the sum of $\vec{\varrho}$ and $\vec{\rho}$. For example, if both $\vec{\rho}$ and $\vec{\sigma}$ are sets of qubit states and lie in two different planes that including the center of the Bloch sphere respectively, then both $\vec{\rho}$ and $\vec{\sigma}$ are set imaginarity free
	but	$\vec{\rho}\,\cup\,\vec{\sigma}$ is not. Moreover, if $\vec{\rho}\cap\vec{\sigma}=\emptyset$ additionally,  $\vec{\rho}\cap\vec{\sigma}$ contains no set imaginarity. For the high-dimensional case, if $\vec{\varrho}$ and $\vec{\rho}$ are two commutative sets of states, i.e., the states in $\vec{\rho}$ (respectively, $\vec{\sigma}$) are commutative with each other, but there exists $\rho_i\in\vec{\rho}$ and $\sigma_j\in\vec{\sigma}$ such that $[\rho_i, \sigma_j]\neq0$, then both $\vec{\rho}$ and $\vec{\sigma}$ are set imaginarity free but it is possible that
	$\vec{\rho}\,\cup\,\vec{\sigma}$ contains set imaginarity. Moreover, if $\vec{\rho}\cap\vec{\sigma}=\emptyset$ or $\vec{\rho}\cap\vec{\sigma}$ is a commutative set additionally, $\vec{\rho}\cap\vec{\sigma}$ contains no set imaginarity. In addition, by Theorem~\ref{theorem1}, the free operation of set imaginarity should be a common single operation. With these observations keep in mind, we now give the definition of the SIM as follows.

\begin{definition}\label{SIM}
A function $\vmI: \mP_{\mS(d)}\mapsto\mbR_+$ is called a SIM if it satisfies the following conditions (SI1)--(SI3):

(SI1) $\vmI(\vec{\rho})=0$ if $\vec{\rho}$ is set imaginarity free;


(SI2) For any unitary $U$,
\be 
\vmI(\vec{\rho})
=\vmI(U\vec{\rho} U^\dagger)
\ee
for any $\vec{\rho}$;

(SI3) $\vmI$ is monotonic under any free operation of set imaginarity, i.e.,
\be
\vmI(\vec{\rho})\geqslant \vmI\left[ \mE(\vec{\rho})\right]
\ee
for any $\vec{\rho}$ and any free operation of set imaginarity $\mE$.

Let $\vmI$ be a SIM. If
(SI4) for any $\vec{\rho}\subseteq\vec{\varrho}$, 
\be 
\vmI(\vec{\rho})\leqslant \vmI(\vec{\varrho}),
\ee
we call it a unified SIM. Moreover, for a unified SIM $\vmI$ if (SI5) $\vec{\rho}\,\cap\,\vec{\sigma}=\emptyset$, then \be 
\vmI(\vec{\rho})+\vmI(\vec{\sigma})\leqslant \vmI(\vec{\rho}\,\cup\,\vec{\sigma}),
\ee we call it a complete SIM.  Here we let $\vmI(\emptyset)\equiv0$ and $\vmI(\rho)\equiv0$ for any single state $\rho$. In addition, if a unified SIM $\vmI$ satisfies
(SI6) 
\be \vmI(\vec{\rho})+\vmI(\vec{\sigma})\leqslant \vmI(\vec{\rho}\cup\vec{\sigma})+\vmI(\vec{\rho}\cap\vec{\sigma}),
\ee 
it is said to be strongly superadditive.
\end{definition}

If $\vmI$ admits item (SI4), then it is said to be set monotonic. Item (SI5) is called superadditivity and item (SI6) is called strong superadditivity.

\begin{table}
	\caption{\label{tab:table0} Comparison of unified and complete SIMs with unified and complete MEMs, respectively.
        Free operation of set imaginarity is abbreviated to FOoSI.}
	\begin{ruledtabular}
		\begin{tabular}{cc}
			Coarsening relation/compatible & Set relation/compatible\\
			resource measure & set-resource measure \\ \colrule
			$\succ^c$/EM & $\vec{\rho}\xrightarrow{\mE} \vec{\rho'}$, \text{FOoSI}~$\mE$/SIM \\
			$\succ^a$/unified MEM & $\vec{\varrho}\rightarrow\vec{\rho}$, $\vec{\rho}\subsetneq\vec{\varrho}$/unified SIM\\
			$\succ^b$/complete MEM & $\vec{\varrho}\rightarrow\left\lbrace \vec{\rho}, \vec{\sigma}\right\rbrace$,\\            &$\vec{\varrho}=\vec{\rho}\cup\vec{\sigma}$/complete SIM
		\end{tabular}
	\end{ruledtabular}
\end{table}

\begin{remark}
For any $\vec{\rho}\subset\mS(d)$ and $\vec{\varrho}\subset\mS(d')$ with $|\vec{\rho}|=|\vec{\varrho}|$, we let
\beax
p\vec{\rho} \oplus (1-p)\vec{\varrho}&:=&\{p\rho_1\oplus(1-p)\varrho_1, p\rho_2\oplus(1-p)\varrho_2,\\
&&~\cdots, p\rho_n\oplus(1-p)\varrho_n\}.
\eeax
The example in Ref.~\cite{LiMS2026pra} shows that, for any SIM $\vmI$, $\vmI\left(p\vec{\rho} \oplus (1-p)\vec{\varrho} \right)\neq p\vmI(\vec{\rho}) + (1-p)\vmI(\vec{\varrho})$ for some $\vec{\rho} $, $\vec{\varrho}$, and $0<p<1$. Namely, there is no counterpart of direct-sum additivity for SIM [i.e., condition (I5)].
Note that there is no counterpart of probabilistic monotonicity for SIMs either.
In fact, for any free operation of set imaginarity with Kraus operators $\{K_j\}$, we let $ p_{ij} = \tr(K_j \rho_i K_j^\dagger)$, $\vec{\rho}_{(j)}=\{\rho_{1j}, \rho_{2j}, \dots, \rho_{nj}\}$, $\rho_{ij} = \frac{1}{p_{ij}} K_j \rho_i K_j^\dagger $, $1\leqslant i\leqslant n$, and then $p_{ij}\neq p_{i'j}$ in general whenever $i\neq i'$. Only in the special case where, for each outcome $j$, the probability $p_{ij}$ is independent of the input state $\rho_i$, i.e., $p_{ij}=p_{i'j}$ for all $1\leqslant i, i'\leqslant n$, can we write
$p_{ij}=p_j$ and impose the probabilistic monotonicity condition $\sum\limits_jp_j\vmI(\vec{\rho}_{(j)})\leqslant\vmI(\vec{\rho})$. The counterpart of convexity for SIMs is also meaningless since there are no simultaneously convex combinations $\rho_j=\sum\limits_{i}p_i\rho_{ij}$ for all $j$ in general.
\end{remark}

\begin{remark}
The terms unified SIM and complete SIM here are used in analogy with the unified multipartite entanglement measure (MEM) and complete MEM introduced in Refs.~\cite{Guo2020pra,Guo2022entropy,Guo2024pra,Guo2024rip,Guo2025arxiv}.
A detailed review of these concepts is provided in Appendix~\ref{a} and Appendix~\ref{b}.
Coarsening relation within system partitions serves as the primary method for MEMs classification. There are three basic coarsening relations, i.e., coarsening relations of type (a), (b), and (c), respectively. The original MEM is mainly based on coarsening relation of type (a), the unified MEM is based on type (b), while the complete MEM corresponds to type (c). The counterparts of these three basic coarsening relations for the state-set resources,
in a somewhat imprecise sense, can be regarded as the set of states transformation under free operation, discarding some of the states in the set, and dividing the set of states into subsets, respectively. The comparison is summarized in Table~\ref{tab:table0}.
\end{remark}


\section{Qubit set-imaginarity measure via the Bargmann invariants}


In this paper, we concentrate on the qubit set-imaginarity measure (QbSIM), because the precise form of the free operation for qudit set imaginarity beyond qubits ($d\geqslant3$) remains unknown. We begin with discussing the quantity $\vmI$ in Eq.~\eqref{I_n}. We let  $\vec{\varrho}=\{\rho_1, \rho_2, \rho_3, \rho_4\}$ with the Bloch vectors ${\bfr_1}=(1, 0, 0)$, ${\bfr_2}=(0, 1, 0)$, ${\bfr_3}=(0, 0, 1)$,  ${\bfr_4}=(-\frac12, 0, -\frac12)$, respectively. Then
$\vmI(\vec{\varrho})=\left| \rIm \tr(\rho_1 \rho_2 \rho_3 \rho_4)\right|=\frac{1}{8}\left| \det(\bfr_1, \bfr_2, \bfr_3)\right.+ \det(\bfr_1, \bfr_2, \bfr_4)+ \det(\bfr_1, \bfr_3, \bfr_4) \left. +\det(\bfr_2, \bfr_3, \bfr_4)\right|=0$. Taking $\vec{\rho}=\{\rho_1, \rho_2, \rho_3\}$, we have $\vmI(\vec{\rho})=|\rIm\tr(\rho_1\rho_2\rho_3)| =\frac{1}{4}|\det(\bfr_{1}, \bfr_{2}, \bfr_{3})|=\frac14$; that is $\vmI(\vec{\rho}) > \vmI(\vec{\varrho})=0$. We thus conclude that Eq.~\eqref{I_n} is unable to be a proper quantity of the set imaginarity. Together with Corollary III.7 in~Ref.~\cite{LiMS2026pra}, we give the following quantity.

\begin{definition}
For a set of $n$ qubit states $\vec{\rho}=\{\rho_j: 1\leqslant j\leqslant n\}$, $n\geqslant3$, we define
\bea
\vmI_{\max}(\vec{\rho}) = \max_{1 \leqslant i < j < k \leqslant n}\left|\rIm\tr(\rho_i \rho_j \rho_k)\right|.
\eea
\end{definition}

It is clear that the items $\mathrm{(SI1)}$--$\mathrm{(SI4)}$ in Definition~\ref{SIM} hold true for $\vmI_{\max}$. It is worth mentioning here that for any SIM $\vmI$ and $\vec{\rho}$, $\vmI(\vec{\rho})$ is defined on the set $\vec{\rho}$. So $\vmI(\vec{\rho})$ is invariant under permutation of the states in $\vec{\rho}$ although we always assume that there is a order in $\vec{\rho}$ when we discuss a concrete SIM, i.e., for example,
\bex
\vmI_{\max}(\{\rho_1,\rho_2, \dots, \rho_n\})=\vmI_{\max}(\{\rho_{\pi(1)},\rho_{\pi(2)}, \dots, \rho_{\pi(n)}\})
\eex
for any permutation $\pi$ acting on $\{1, 2, \dots, n\}$, which is clear by the conjugation symmetry of Bargmann invariants~\cite{TYchien2016}
\bex
|\rIm\tr(\rho_{1}\rho_{2}\cdots\rho_{n})|
=
|\rIm\tr(\rho_{n}\rho_{n-1}\cdots\rho_{1})|.
\eex
Moreover, we have the following proposition.

\begin{pro}\label{pro1}
$\vmI_{\max}$ is a unified QbSIM but it is not complete.
\end{pro}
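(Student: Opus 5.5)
The proof has two parts: verify (SI1)--(SI4) for $\vmI_{\max}$, then give an explicit family violating (SI5). Everything runs through the triple formula from the proof of Theorem~\ref{theorem1},
\bex
|\rIm\tr(\rho_i\rho_j\rho_k)|=\tfrac14|\det(\bfr_i,\bfr_j,\bfr_k)|.
\eex

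\emph{Unifiedness.} The four axioms are checked in turn.

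For (SI1), if $\vec{\rho}$ is set imaginarity free, then by Corollary III.7 of Ref.~\cite{LiMS2026pra} all Bloch vectors lie in a plane through the origin. Hence every triple determinant vanishes and $\vmI_{\max}(\vec{\rho})=0$.

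For (SI2), each Bargmann invariant is unitarily invariant, so the maximum is too.

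For (SI3), Theorem~\ref{theorem1} says a free operation is either unital or planarized.
\begin{itemize}
\item In the unital case, $\bfc=\mathbf 0$ and $\|T\|\le1$. Each triple therefore satisfies $|\det(T\bfr_i,T\bfr_j,T\bfr_k)|=|\det T|\,|\det(\bfr_i,\bfr_j,\bfr_k)|\le|\det(\bfr_i,\bfr_j,\bfr_k)|$. Taking the maximum over triples on both sides gives monotonicity.
\item In the planarized nonunital case, the image of the Bloch ball lies in a plane through the centre. So every output triple is coplanar with the origin, each output determinant is $0$, and $\vmI_{\max}[\mE(\vec{\rho})]=0\le\vmI_{\max}(\vec{\rho})$.
\end{itemize}

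For (SI4), if $\vec{\rho}\subseteq\vec{\varrho}$, every triple of $\vec{\rho}$ is a triple of $\vec{\varrho}$, so the maximum can only grow. Sets with fewer than three states have value $0$ by convention, which is consistent with this.

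\emph{Failure of completeness.} It suffices to find disjoint $\vec{\rho},\vec{\sigma}$ with $\vmI_{\max}(\vec{\rho})+\vmI_{\max}(\vec{\sigma})>\vmI_{\max}(\vec{\rho}\cup\vec{\sigma})$. The key bound is $|\det|\le1$ for unit vectors, so $\vmI_{\max}\le\frac14$, with equality exactly when some triple is orthonormal up to signs.

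Take two disjoint triples that each attain $\frac14$. Let $\vec{\rho}$ have Bloch vectors $(1,0,0),(0,1,0),(0,0,1)$, and let $\vec{\sigma}$ have Bloch vectors $(-1,0,0),(0,-1,0),(0,0,-1)$. Then $\vmI_{\max}(\vec{\rho})+\vmI_{\max}(\vec{\sigma})=\frac12$. Every triple in the union has unit vectors, so $\vmI_{\max}(\vec{\rho}\cup\vec{\sigma})\le\frac14<\frac12$, and (SI5) fails.

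The only step needing care is (SI3) in the planarized case. There one must check that "planarized" as defined in the paper really forces every image triple to be coplanar with the origin, so that the determinant of the image Bloch vectors vanishes. This holds because the image plane contains $I/2$, whose Bloch vector is the origin.
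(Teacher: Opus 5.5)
Your proof is correct and follows the paper's approach. The paper also treats (SI1)--(SI4) as immediate (you spell out the unital/planarized check for (SI3)) and refutes (SI5) with two disjoint triples; the only difference is the counterexample, where the paper pairs the orthonormal triple with its half-scaled copy ($\frac14+\frac1{32}>\frac14$), while your antipodal triple gives $\frac14+\frac14>\frac14$, and both are valid.
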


\begin{proof}
We now give counterexamples which show that $\vmI_{\max}$ violates items (SI5) and (SI6) in Definition~\ref{SIM}.
We take $\vec{\rho}=\{\rho_1, \rho_2, \rho_3\}$ and $\vec{\sigma}=\{\rho_4, \rho_5, \rho_6\}$ with the corresponding Bloch vectors $\bfr_{1}=(1, 0, 0)$, $\bfr_{2}=(0, 1, 0)$, $\bfr_{3}=(0, 0, 1)$, $\bfr_{4}=(\frac{1}{2}, 0, 0)$, $\bfr_{5}=(0, \frac{1}{2}, 0)$, and $\bfr_{6}=(0, 0, \frac{1}{2})$, respectively. Then
\begin{align*}
	&\vmI_{\max}(\vec{\rho})=\frac{1}{4}, \quad
	\vmI_{\max}(\vec{\sigma})=\frac{1}{32}, \\
	&\vmI_{\max}(\vec{\rho} \cup \vec{\sigma})=\frac{1}{4}, \quad
	\vmI_{\max}(\vec{\rho} \cap \vec{\sigma})=0.
\end{align*}
Therefore,
\bex
\vmI_{\max}(\vec{\rho})+\vmI_{\max}(\vec{\sigma})\geqslant \vmI_{\max}(\vec{\rho} \cup \vec{\sigma}).
\eex
That is, $\vmI_{\max}$ violates item (SI5).

Let $\vec{\rho}=\{\rho_1, \rho_2, \rho_3, \rho_7, \rho_8, \rho_9\}$ and $\vec{\sigma}=\{\rho_4, \rho_5, \rho_6, \rho_7, \rho_8, \rho_9\}$ with the corresponding Bloch vectors are $\bfr_{1}=(1, 0, 0)$, $\bfr_{2}=(0, 1, 0)$, $\bfr_{3}=(0, 0, 1 )$, $\bfr_{4}=(\frac{1}{2}, 0, 0 )$, $\bfr_{5}=(0, \frac{1}{2}, 0  )$, $\bfr_{6}=(0, 0, \frac{1}{2})$, $\bfr_{7}=(\frac14, 0, 0)$, $\bfr_{8}=(0, \frac14, 0)$, and $\bfr_{9}=(0, 0, \frac14)$, respectively. We get
\begin{align*}
	&\vmI_{\max}(\vec{\rho})=\frac{1}{4}, \quad
	\vmI_{\max}(\vec{\sigma})=\frac{1}{32}, \\
	&\vmI_{\max}(\vec{\rho} \cup \vec{\sigma})=\frac{1}{4}, \quad
	\vmI_{\max}(\vec{\rho} \cap \vec{\sigma})=\frac{1}{256}.
\end{align*}
Therefore,
\bex
\vmI_{\max}(\vec{\rho})+\vmI_{\max}(\vec{\sigma})\geqslant \vmI_{\max}(\vec{\rho} \cup \vec{\sigma})+\vmI_{\max}(\vec{\rho} \cap \vec{\sigma}).
\eex
Hence, items~(SI5) and (SI6) fail.
\end{proof}

For any triplet $\{\rho_1,\rho_2,\rho_3\}$, $\bigl|\rIm\tr(\rho_1 \rho_2 \rho_3)\bigr|= \frac14 \bigl|\bfr_1 \cdot (\bfr_2 \times \bfr_3)\bigr|\leqslant \frac14 \|\bfr_1\|\,\|\bfr_2\|\,\|\bfr_3\|\leqslant \frac14$, that is, for any $\vec{\rho}$ with $|\vec{\rho}|=n \geqslant 3$,
\bea
\vmI_{\max}(\vec{\rho}) \leqslant \frac14.
\eea
The upper bound is achieved by choosing $\bfr_1=(1,0,0)$, $\bfr_2=(0,1,0)$, and $\bfr_3=(0,0,1)$, for which the corresponding states are
$\rho_1 = |+\rangle\langle +|$,~$\rho_2 = |\hat{+}\rangle\langle \hat{+}|$, and~$\rho_3 = |0\rangle\langle 0|$, respectively,
where $|+\rangle = (|0\rangle+|1\rangle)/\sqrt2$ and $|\hat{+}\rangle = (|0\rangle+\rmi|1\rangle)/\sqrt2$.

From the arguments above, we give another quantity based on the Bargmann invariants of the triplet states. For a set of qubit states $\vec{\rho} = \{\rho_j : 1 \leqslant j \leqslant n\}$, $n\geqslant3$, we let
\be \label{sum}
\vmI_{\Sigma}(\vec{\rho})=\sum_{1 \leqslant i < j < k \leqslant n} | \rIm \tr(\rho_i \rho_j \rho_k) |.
\ee

\begin{theorem}\label{theorem2}
$\vmI_{\Sigma}$ is a complete QbSIM and it is strongly superadditive.
\end{theorem}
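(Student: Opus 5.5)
The plan is to reduce every property to facts about the triple set $T(\vec{\rho})=\{\{i,j,k\}:1\leqslant i<j<k\leqslant n\}$ of three-element subsets of $\vec{\rho}$. The key identity is that $|\rIm\tr(\rho_i\rho_j\rho_k)|=\frac14|\det(\bfr_i,\bfr_j,\bfr_k)|$. This quantity depends only on the unordered triple, by the conjugation symmetry of Bargmann invariants together with cyclicity of the trace. So we may write
\[
\vmI_{\Sigma}(\vec{\rho})=\sum_{\tau\subseteq\vec{\rho},\,|\tau|=3} f(\tau),\qquad f(\tau)=\tfrac14|\det(\bfr_i,\bfr_j,\bfr_k)|\geqslant0 .
\]
With the conventions $\vmI_\Sigma(\emptyset)=0$ and $\vmI_\Sigma=0$ on sets of one or two states, this formula extends consistently to all finite sets, since the sum is then empty.

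The steps are as follows.

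(SI1) If $\vec{\rho}$ is set imaginarity free, Corollary III.7 of Ref.~\cite{LiMS2026pra} says all Bloch vectors lie in a plane through the origin. Hence every determinant vanishes and $\vmI_\Sigma(\vec{\rho})=0$.

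(SI2) A unitary $U$ acts on Bloch vectors as a rotation $O\in SO(3)$, so every determinant is multiplied by $\det O=1$. Alternatively, one can simply invoke the unitary invariance of Bargmann invariants.

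(SI3) By Theorem~\ref{theorem1}, a free operation $\mE$ is either unital or planarized. In the unital case ($\bfc=0$), Case 1 of that proof gives $|\det(\bfr_i',\bfr_j',\bfr_k')|=|\det T|\,|\det(\bfr_i,\bfr_j,\bfr_k)|\leqslant|\det(\bfr_i,\bfr_j,\bfr_k)|$ termwise, and summing gives the claim. In the planarized nonunital case, the image $\mE[\mS(2)]$ lies in a plane through $I/2$, i.e.\ through the origin of the Bloch ball. Therefore all output triples are coplanar with $0$, and $\vmI_\Sigma[\mE(\vec{\rho})]=0\leqslant\vmI_\Sigma(\vec{\rho})$.

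One technical point needs care here: $\mE$ may identify two distinct states, so $\mE(\vec{\rho})$ is really a multiset or a smaller set. I would interpret $\mE(\vec{\rho})$ as the tuple, as the paper does, or else observe that collapsing duplicates only deletes triples containing a repeated vector. Such triples have zero determinant anyway, so deleting them does not change the sum. This is the step I expect to need the most care, but it is not deep.

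(SI4) If $\vec{\rho}\subseteq\vec{\varrho}$, then $T(\vec{\rho})\subseteq T(\vec{\varrho})$ and all summands are nonnegative, so $\vmI_\Sigma(\vec{\rho})\leqslant\vmI_\Sigma(\vec{\varrho})$.

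Strong superadditivity (SI6) is an inclusion–exclusion count on triples. Every triple contained in $\vec{\rho}\cap\vec{\sigma}$ lies in both $T(\vec{\rho})$ and $T(\vec{\sigma})$. Hence
\[
\vmI_\Sigma(\vec{\rho})+\vmI_\Sigma(\vec{\sigma})=\sum_{\tau\in T(\vec{\rho})\cup T(\vec{\sigma})}f(\tau)+\sum_{\tau\in T(\vec{\rho}\cap\vec{\sigma})}f(\tau),
\]
using $T(\vec{\rho})\cap T(\vec{\sigma})=T(\vec{\rho}\cap\vec{\sigma})$. Moreover $T(\vec{\rho})\cup T(\vec{\sigma})\subseteq T(\vec{\rho}\cup\vec{\sigma})$; the remaining "mixed" triples, which meet both $\vec{\rho}\setminus\vec{\sigma}$ and $\vec{\sigma}\setminus\vec{\rho}$, contribute nonnegative terms. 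This gives (SI6).

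(SI5) follows as the special case $\vec{\rho}\cap\vec{\sigma}=\emptyset$, where $\vmI_\Sigma(\emptyset)=0$. Together with (SI4), this shows that $\vmI_\Sigma$ is a complete, strongly superadditive QbSIM.
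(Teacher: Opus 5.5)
Your proposal is correct and follows essentially the same route as the paper. (SI1)--(SI4) come from the determinant form of the triple Bargmann invariants, together with Case~1 of Theorem~\ref{theorem1} ($|\det T|\leqslant 1$) and the planarized case. (SI6) is the same count of triples by their overlap with $\vec{\rho}\cap\vec{\sigma}$ that the paper writes as a four-way split of the sums, and (SI5) is its disjoint special case.

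One small correction to your duplicate remark: if $\mE(\rho_i)=\mE(\rho_j)$, collapsing to a set also discards triples containing only one of the two copies, whose determinants need not vanish. Every discarded term is nonnegative, though, so the value can only drop and (SI3) still holds.
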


\begin{proof}
By definition, together with Proposition~\ref{pro1}, it is clear that the items (SI1)--(SI5) in Definition~\ref{SIM} hold true for $\vmI_{\Sigma}$.

We now check the item (SI6). We let $\vec{\rho}=\{\rho_1, \rho_2, \dots, \rho_p\}$ and $\vec{\sigma}=\{\sigma_1, \sigma_2, \dots, \sigma_q\}$ be two finite sets of qubit states, and denote by $\vec{\omega}=\vec{\rho}\cup\vec{\sigma}$. If $\left| \vec{\rho}\cap\vec{\sigma}\right|\leqslant 3$, then the strong superadditivity is clear by definition. We assume that $\rho_i=\sigma_i$ for $1\leqslant i\leqslant s$, $3<s<p$, $s<q$. It follows that
\begin{widetext}
	\beax \vmI_{\Sigma}(\vec{\rho})&=&\sum_{1 \leqslant i < j < k \leqslant s} | \rIm \tr(\rho_i \rho_j \rho_k) |
	+\sum_{1 \leqslant i < j  \leqslant s<k} | \rIm \tr(\rho_i \rho_j \rho_k) |
	+\sum_{1 \leqslant i \leqslant s<j<k} | \rIm \tr(\rho_i \rho_j \rho_k) |
	+\sum_{s < i < j <k} | \rIm \tr(\rho_i \rho_j \rho_k) |,\\
	\vmI_{\Sigma}(\vec{\sigma})&=&\sum_{1 \leqslant i < j < k \leqslant s} | \rIm \tr(\sigma_i \sigma_j \sigma_k) |
	+\sum_{1 \leqslant i < j  \leqslant s<k} | \rIm \tr(\sigma_i \sigma_j \sigma_k) |
	+\sum_{1 \leqslant i \leqslant s<j<k} | \rIm \tr(\sigma_i \sigma_j \sigma_k) |
	+\sum_{s < i < j <k} | \rIm \tr(\sigma_i \sigma_j \sigma_k) |,
	\eeax
	and
	\beax \vmI_{\Sigma}(\vec{\rho}\cap\vec{\sigma})=\sum_{1 \leqslant i < j < k \leqslant s} | \rIm \tr(\sigma_i \sigma_j \sigma_k) |
	=\sum_{1 \leqslant i < j < k \leqslant s} | \rIm \tr(\rho_i \rho_j \rho_k) |.\eeax
\end{widetext}
This implies that item (SI6) holds obviously.
\end{proof}

Let $\vec{\rho}=\{\rho_i\}_{i=1}^4$ be a set of four qubit states with Bloch vectors are $\{\bfr_i\}_{i=1}^4$. Then
\bea
\check I_\Sigma(\vec{\rho})
\le
\frac{4}{3\sqrt{3}}.
\eea
Moreover, the equality holds if and only if there exist signs $\varepsilon_i\in\{\pm1\}$ such that $\{\varepsilon_i \bfr_i\}_{i=1}^4$
form a regular tetrahedron in the Bloch sphere, i.e.,
\[
\|\bfr_i\|=1,~
(\varepsilon_i \bfr_i)\cdot(\varepsilon_j \bfr_j)=-\frac{1}{3},~~ i\ne j.
\]

In order to see this, we let
\[
S_4:=\sum_{1\leqslant i<j<k\leqslant 4} |\det(\bfr_i,\bfr_j,\bfr_k)|,
\]
and $D_1=\det(\bfr_2,\bfr_3,\bfr_4)$,
$D_2=\det(\bfr_1,\bfr_3,\bfr_4)$,
$D_3=\det(\bfr_1,\bfr_2,\bfr_4)$,
$D_4=\det(\bfr_1,\bfr_2,\bfr_3)$,
$\bfx_1=\operatorname{sgn}(D_1)\bfr_1$,
$\bfx_2=-\operatorname{sgn}(D_2)\bfr_2$,
$\bfx_3=\operatorname{sgn}(D_3)\bfr_3$,
and $\bfx_4=-\operatorname{sgn}(D_4)\bfr_4$.
Then
\[
a_1 \bfx_1 + a_2 \bfx_2 + a_3 \bfx_3 + a_4 \bfx_4 = 0,
~a_i=|D_i|.
\]
Hence $0 \in \operatorname{conv}\{\bfx_1,\bfx_2,\bfx_3,\bfx_4\}$. Let $T=\operatorname{conv}\{\bfx_1$, $\bfx_2$, $\bfx_3$, $\bfx_4\}$;
then $\operatorname{Vol}(T)=\frac{1}{6} S_4$ and $S_4=6\,\operatorname{Vol}(T)$. Maximizing under $\|\bfx_i\|\leqslant 1$, the maximum occurs on the sphere and is achieved by the regular tetrahedron:
\[
\operatorname{Vol}_{\max}=\frac{8}{9\sqrt3}.
\]

The case $n=4$ is special in that the geometric structure becomes rigid: The extremal configuration corresponds (up to sign changes) to a regular tetrahedron, i.e., the qubit symmetric informationally complete positive operator-valued measure~\cite{Renes2004}.
For larger sets, the situation appears to be substantially more complicated. In particular, for $n\geqslant 5$, the quantity
$\check I_\Sigma(\vec{\rho})$ no longer reduces to a simple extremal polytope problem with a unique highly symmetric solution.
Preliminary numerical evidence suggests that the optimal configurations for $n=5$ are not given by standard spherical codes such as the triangular bipyramid, and exhibit a more intricate structure. Determining the exact maximum of $\check I_\Sigma$ for $n\geqslant 5$,
as well as characterizing the corresponding optimal configurations, remains an interesting open problem.


\section{Robustness of qubit set imaginarity}


Although $\vmI_{R}$ was first introduced in Ref.~\cite{LiMS2026pra}, its status as a rigorously defined SIM remains an open question.
Below, we show that for qubits it behaves as a QbSIM yet still falls short of the finer criteria needed to fully quantify  set imaginarity in a fine-grained manner.

\begin{pro}\label{pro2}
For the qubit system, $\vmI_{R}$ is a SIM, but it is neither unified nor complete.
\end{pro}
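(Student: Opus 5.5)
For qubits the plan is to reduce $\vmI_R$ to a purely geometric quantity on Bloch vectors. Then (SI1)--(SI3) can be checked directly and (SI4), (SI5) refuted by explicit examples.

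\textbf{Geometric form of $\vmI_R$.} For a single qubit state with Bloch vector $\bfr=(r_x,r_y,r_z)$ the real states are exactly those with $r_y=0$. Since $\mI_R=\mI_\tr$, we get $\mI_R(\rho)=\frac12\|\rho-\rho^T\|_\tr=|r_y|$ (up to the paper's normalization of the trace norm, which only rescales everything). A unitary acts on Bloch vectors as a rotation $O\in SO(3)$, and $(O\bfr)_y=(O^T e_y)\cdot\bfr$. As $U$ ranges over all unitaries, $O^T e_y$ ranges over all unit vectors $\bfu$, so
\be
\vmI_R(\vec{\rho})=\min_{\|\bfu\|=1}\frac1n\sum_{j=1}^n|\bfu\cdot\bfr_j| .
\ee
(SI2) is immediate, because the minimum already runs over all unitaries and absorbs any common $U$.

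\textbf{(SI1).} By Corollary III.7 of Ref.~\cite{LiMS2026pra}, a set-imaginarity-free set has all its Bloch vectors in a plane through the origin. Taking $\bfu$ normal to that plane gives $\vmI_R=0$.

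\textbf{(SI3).} By Theorem~\ref{theorem1} there are two kinds of free operation.
\begin{itemize}
\item \emph{Planarized operation.} The image of the Bloch ball lies in a plane containing $I/2$, i.e. a plane through the origin. So $\mE(\vec{\rho})$ is free and $\vmI_R[\mE(\vec{\rho})]=0$.
\item \emph{Unital operation}, $\bfr_j'=T\bfr_j$ with $\|T\|\le1$. Let $\bfu$ be optimal for $\vec{\rho}$.
 \begin{itemize}
 \item If $T$ is singular, pick a unit $\bfv\in\ker T^T$. Then $\bfv\cdot T\bfr_j=0$ for all $j$, and the output value is $0$.
 \item If $T$ is invertible, set $\bfv=T^{-T}\bfu/\|T^{-T}\bfu\|$. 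Then $|\bfv\cdot T\bfr_j|=|\bfu\cdot\bfr_j|/\|T^{-T}\bfu\|$. Since $\sigma_{\max}(T)\le1$ we have $\|T^{-T}\bfu\|\ge 1/\sigma_{\max}(T)\ge1$, hence each term does not increase and $\vmI_R[\mE(\vec{\rho})]\le\vmI_R(\vec{\rho})$.
 \end{itemize}
\end{itemize}
This minimal-singular-value bound is the only genuinely non-trivial step in (SI3).

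\textbf{Failure of (SI4) and (SI5).} Both failures exploit the factor $1/n$.
\begin{itemize}
\item Let $\vec{\rho}$ have Bloch vectors $e_x,e_y,e_z$. Then $\vmI_R(\vec{\rho})=\frac13\min_{\|\bfu\|=1}(|u_1|+|u_2|+|u_3|)=\frac13$, attained at a coordinate axis.
\item \emph{(SI4).} Adjoin $k$ further distinct states with Bloch vectors of length at most $\delta$. The numerator grows by at most $k\delta$ while the denominator becomes $3+k$. For, e.g., $k=3$ and $\delta$ small the value drops below $\frac13$, so set monotonicity fails and $\vmI_R$ is not unified.
\item \emph{(SI5).} Let $\vec{\sigma}$ have Bloch vectors $\frac12e_x,\frac12e_y,\frac12e_z$, so $\vmI_R(\vec{\sigma})=\frac16$. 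The union gives $\frac16\min_{\bfu}\frac32(|u_1|+|u_2|+|u_3|)=\frac14<\frac13+\frac16$. Superadditivity fails, so $\vmI_R$ is not complete.
\end{itemize}

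The main obstacle is the unital case of (SI3): the constant must be controlled uniformly in $\bfr_j$, which the singular-value argument above handles. The counterexamples are routine once the closed form of $\vmI_R$ is in hand.
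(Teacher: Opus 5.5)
Your proof is correct and follows essentially the paper's route. Both use the closed form $\vmI_R(\vec{\rho})=\min_{\bfp\in S^2}\frac1n\sum_j|\langle\bfr_j|\bfp\rangle|$, show vanishing on planarized outputs and for singular unital $T$, use the norm bound $\|T\|\leqslant1$ for invertible unital $T$, and refute (SI4) and (SI5) with counterexamples that exploit the $1/n$ factor. Your explicit choice $\bfv=T^{-T}\bfu/\|T^{-T}\bfu\|$ makes rigorous an inequality the paper states only tersely, and your counterexamples (adjoining short Bloch vectors; the scaled triple $\frac12e_x,\frac12e_y,\frac12e_z$ giving $\frac14<\frac12$) are valid variants of the paper's (adjoining $I/2$; a coplanar $\vec{\sigma}$ with value $0$).
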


\begin{proof} We only need to check items (SI3)--(SI5) in Definition~\ref{SIM}. According to Ref.~\cite{LiMS2026pra}, $\vmI_{R}$ can be calculated as
	\be
	\vmI_{R}(\vec{\rho})
	=
	\min_{\bfp\in S^2}\frac1n\sum_{j=1}^n |\langle \bfr_j|\bfp\rangle|,
	\ee
	where $S^2=\{\bfp\in\mathbb{R}^3:\|\bfp\|=1\}$ and the minimum is taken over all unit vectors on $S^2$. For any set $\vec{\rho}$ and any unital channel $\mE$, we have
	\bex
	\vmI_{R}[\mE(\vec{\rho})]
	=\min_{\bfp\in S^2}\frac1n\sum_{j=1}^n |\langle \bfr_jT^\dag|\bfp\rangle|,
	\eex
	where $T$ is the associated matrix as in Eq.~(11). If $\rank(T)\le2$, then we have $\vmI_{R}[\mE(\vec{\rho})]=0$. If $\rank(T)=3$, for any $\bfp\in S^2$, then we have $T^\dagger\bfp \neq 0$, and $\|T^\dagger\bfp\| \leqslant 1$. It turns out that
	\beax
	\vmI_{R}[\mE(\vec{\rho})]
	&=&\min_{\bfp\in S^2}\frac1n\sum_{j=1}^n |\langle \bfr_jT^\dag|\bfp\rangle|\\
	&=&\min_{\bfp\in S^2}\frac1n\sum_{j=1}^n |\langle \bfr_j|T^\dagger \bfp\rangle|\\
	&\leqslant&\min_{\bfp\in S^2}\frac1n\sum_{j=1}^n |\langle \bfr_j|\bfp\rangle|\\
	&=&\vmI_{R}(\vec{\rho}),
	\eeax
	as desired. If $\mE$ is a planarized channel, then
	\bex
	\vmI_{R}[\mE(\vec{\rho})]
	=\min_{\bfp\in S^2}\frac1n\sum_{j=1}^n |\langle \bfr_jT^\dag+\bfc|\bfp\rangle|=0,
	\eex
	since $\dim {\rm span}\{T\bfr_j+\bfc\}\leqslant 2$. Namely $\vmI_{R}(\vec{\rho}) \geqslant \vmI_{R}[\mE(\vec{\rho})]$ for planarized channel $\mE$.
	
	Consider $\vec{\rho}=\{\rho_1,\rho_2,\rho_3\}$ with Bloch vectors $\bfr_1=(1,0,0)$, $\bfr_2=(0,1,0)$, and $\bfr_3=(0,0,1)$, respectively. Then
	\bex
	\vmI_{R}(\vec{\rho})=\min_{\bfp\in S^2}\frac13(|p_x|+|p_y|+|p_z|).
	\eex
	For every unit vector $\bfp=(p_x,p_y,p_z)$,
	\bea\label{eq1}
	|p_x|+|p_y|+|p_z|\geqslant \sqrt{p_x^2+p_y^2+p_z^2}=1.
	\eea
	We take $\bfp=(1, 0, 0)$, $|p_x|+|p_y|+|p_z|=1$. Hence $\vmI_{R}(\vec{\rho})=\frac13$. Now let $\tau$ be the maximally mixed state, and let $\vec{\varrho}=\vec{\rho}\cup {\tau}$. Then $\vec{\rho}\subseteq \vec{\varrho}$. Taking $\bfp=(1, 0, 0)$ gives
	\beax
	\vmI_{R}(\vec{\varrho})
	&\leqslant&
	\frac14\bigl(
	|\langle \bfr_1|\bfp\rangle| + |\langle \bfr_2|\bfp\rangle| + |\langle \bfr_3|\bfp\rangle|
	+|\langle \bfr_{\tau}|\bfp\rangle| \bigr)\\
	&=&\frac{1}{4}(1+0+0+0)=\frac{1}{4}.
	\eeax
	This leads to $\vmI_{R}(\vec{\varrho})\leqslant \frac14 < \frac13 = \vmI_{R}(\vec{\rho})$. Namely $\vmI_{R}$ is not unified.

	Let $\vec{\rho}=\{\rho_1,\rho_2,\rho_3\}$ and $\vec{\sigma}=\{\rho_4,\rho_5,\rho_6\}$ with the Bloch vectors are $\bfr_1=(1,0,0)$, $\bfr_2=(0,1,0)$, $\bfr_3=(0,0,1)$, $\bfr_4=(0,0,0)$, $\bfr_5=(\frac12,0,0)$, and $\bfr_6=(0,0,\frac12)$, respectively. Clearly, $\vec{\rho}\cap \vec{\sigma}=\emptyset$. Since all Bloch vectors in $\vec{\sigma}$ lie in the plane $y=0$, $\vmI_{R}(\vec{\sigma})=0$. We now consider
	$\vec{\rho}\cup \vec{\sigma}=\{\rho_1,\rho_2,\rho_3,\rho_4,\rho_5,\rho_6\}$. Taking again $\bfp=(0, 1, 0)$, we obtain
	$\langle \bfr_1|\bfp\rangle=0$, $|\langle \bfr_2|\bfp\rangle|=1$, and $\langle \bfr_3|\bfp\rangle=0$.
	Hence $\vmI_{R}(\vec{\rho}\cup \vec{\sigma}) \le	\frac16(0+1+0+0+0+0)=\frac16 \leqslant\vmI_{R}(\vec{\rho})+\vmI_{R}(\vec{\sigma})=\frac13+0=\frac13$. That is, $\vmI_{R}$ is not superadditive in general.	
\end{proof}

$\vmI_{R}$ is, of course, not strongly superadditive. Taking $\vec{\rho}=\{\rho_1,\rho_2,\rho_3\}$ and $\vec{\sigma}=\{\rho_1,\rho_4,\rho_5\}$ with the Bloch vectors are $r_1=(1,0,0)$, $r_2=(0,1,0)$, $r_3=(0,0,1)$, $r_4=(0,0,0)$, and  $r_5=(\frac12,0,0)$, respectively, we have
$\langle \bfr_1|\bfp\rangle=0$, $\langle \bfr_2|\bfp\rangle=1$, $\langle \bfr_3|\bfp\rangle=0$, $\langle \bfr_4|\bfp\rangle=0$, and	$\langle \bfr_5|\bfp\rangle=0$ whenever $\bfp=(0, 1, 0)$. We thus get
\bex
\vmI_{R}(\vec{\rho})+\vmI_{R}(\vec{\sigma})=\frac13>  \frac15\geqslant\vmI_{R}(\vec{\rho}\cup \vec{\sigma})+\vmI_{R}(\vec{\rho}\cap \vec{\sigma}).
\eex

Analogously to that of the maximal robustness of set-coherence defined in Ref.~\cite{Designolle2021prl}, we can define
\be
\vmI'_{R}(\vec{\rho})=\min_U\max\limits_j\mI_R(U\rho_jU^\dagger),~\forall\,\vec{\rho}\subset\mS(d)
\ee
with respect to any given reference basis.

\begin{pro}
For qubit system, $\vmI'_{R}$ is a unified SIM, but it is neither complete nor strongly superadditive.
\end{pro}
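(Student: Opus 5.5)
The plan is to first reduce $\vmI'_{R}$ on qubits to a geometric formula in Bloch vectors, in the same way the proof of Proposition~\ref{pro2} did for $\vmI_{R}$. For a single qubit state with Bloch vector $\bfr$, the robustness of imaginarity equals the trace-distance measure, which is $|r_y|$. A unitary $U$ acts as a rotation $O\in SO(3)$ on Bloch vectors, and $O^{T}\hat{y}$ ranges over all of $S^2$. Hence I expect
\be
\vmI'_{R}(\vec{\rho})=\min_{\bfp\in S^2}\max_{j}|\langle \bfr_j|\bfp\rangle| .
\ee
This formula is the working tool for every step below.

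\emph{Step 1: (SI1), (SI2) and (SI4).} These follow almost directly from the formula. If $\vec{\rho}$ is set imaginarity free, Corollary III.7 of Ref.~\cite{LiMS2026pra} puts all $\bfr_j$ in a plane through the origin. Choosing $\bfp$ normal to that plane gives $0$, so (SI1) holds. (SI2) holds because a common rotation of all $\bfr_j$ can be absorbed into $\bfp$. For (SI4), enlarging the set can only increase $\max_j|\langle\bfr_j|\bfp\rangle|$ for every fixed $\bfp$. Taking the minimum over $\bfp$ then gives $\vmI'_{R}(\vec{\rho})\le\vmI'_{R}(\vec{\varrho})$ whenever $\vec{\rho}\subseteq\vec{\varrho}$.

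\emph{Step 2: (SI3), monotonicity under the free operations of Theorem~\ref{theorem1}.} I split into cases.
\begin{itemize}
\item If $\mE$ is planarized, all output Bloch vectors lie in a plane through the origin. If $\mE$ is unital with $\rank(T)\le2$, all outputs lie in $R(T)$, again a plane through the origin. In both cases the value is $0$.
\item If $\mE$ is unital with $\rank(T)=3$, I use $|\langle T\bfr_j|\bfp\rangle|=|\langle\bfr_j|T^{T}\bfp\rangle|$. Let $\bfq\in S^2$ be an optimizer for $\vec{\rho}$. Set $\bfp=(T^{T})^{-1}\bfq/\|(T^{T})^{-1}\bfq\|$, so that $T^{T}\bfp=\mu\bfq$ with $\mu=1/\|(T^{T})^{-1}\bfq\|$. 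Since $\|T\|\le1$ (shown in Case~1 of the proof of Theorem~\ref{theorem1}), we get $\|(T^{T})^{-1}\bfq\|\ge1$, hence $\mu\le1$. Therefore $\vmI'_{R}[\mE(\vec{\rho})]\le\mu\,\vmI'_{R}(\vec{\rho})\le\vmI'_{R}(\vec{\rho})$.
\end{itemize}
This is the step I expect to need the most care. The point is to choose $\bfp$ as a preimage of the optimizer $\bfq$, rather than bounding at a fixed $\bfp$ as in the proof of Proposition~\ref{pro2}. That handles the minimization correctly.

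\emph{Step 3: failure of (SI5) and (SI6).} I reuse the configurations from Proposition~\ref{pro1}.
\begin{itemize}
\item For $\{\bfr_1,\bfr_2,\bfr_3\}=\{\hat x,\hat y,\hat z\}$, the value is $\min_{\bfp}\max(|p_x|,|p_y|,|p_z|)=1/\sqrt3$, attained at $\bfp=(1,1,1)/\sqrt3$. A set of the same three directions scaled by $\lambda$ has value $\lambda/\sqrt3$.
\item For (SI5), take $\vec{\rho}$ the unit axes and $\vec{\sigma}$ the half-length axes. Then $\vmI'_{R}(\vec{\rho}\cup\vec{\sigma})=1/\sqrt3$, which is less than $1/\sqrt3+1/(2\sqrt3)$. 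So superadditivity fails.
\item For (SI6), add the quarter-length axes to both sets, as in the proof of Proposition~\ref{pro1}. The left side is $3/(2\sqrt3)$. The right side is $1/\sqrt3+1/(4\sqrt3)=5/(4\sqrt3)$, which is smaller. So strong superadditivity fails.
\end{itemize}
The only routine computation left is confirming the $1/\sqrt3$ minimax value for each configuration.
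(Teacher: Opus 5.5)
Your proposal is correct and follows essentially the same route as the paper: the minimax Bloch-vector formula, a case split over planarized and unital channels for (SI3), pointwise monotonicity of the maximum for (SI4), and axis-configuration counterexamples for (SI5) and (SI6). Two things differ. Your choice $\bfp\propto(T^{T})^{-1}\bfq$ in the rank-3 unital case makes rigorous the step the paper's chain of inequalities only sketches. Your (SI6) counterexample (unit, half and quarter axes) is different from the paper's $\vec{\sigma}=\{\rho_1,\rho_4,\rho_5\}$, but it is equally valid.
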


\begin{proof}
	By definition, it is clear that the items (SI1) and (SI2) in Definition~\ref{SIM} hold true for $\vmI'_{R}$, we only need to check items (SI3)--(SI6).
	$\vmI'_{R}(\vec{\rho})$ can be equivalently written as
	\be
	\vmI'_{R}(\vec\rho)
	=\min_{\bfp\in S^2}\max_j |\langle \bfr_j|\bfp\rangle|,
	\ee
	where $S^2=\{\bfp\in\mathbb{R}^3:\|\bfp\|=1\}$ and the minimum is taken over all unit vectors on $S^2$.
	
	Let $\mE$ be any given unital channel. If $\rank(T)\leqslant 2$, then we have
	\bex
	\vmI'_{R}[\mE(\vec{\rho})]
	=\min_{\bfp\in S^2}\max_j |\langle \bfr_jT^\dag|\bfp\rangle|
	=0.
	\eex
	If $\rank(T)=3$, then $T^\dagger\bfp \neq 0$ and $\|T^\dagger\bfp\| \leqslant 1$. It turns out that
	\beax
	\vmI'_{R}[\mE(\vec\rho)]
	&=&\min_{\bfp\in S^2}\max_j |\langle \bfr_jT^\dag|\bfp\rangle|\\
	&=&\min_{\bfp\in S^2}\max_j |\langle \bfr_j|T^\dag\bfp\rangle|\\
	&\leqslant&\min_{\bfq\in S^2}\max_j|\langle \bfr_j|\bfp \rangle|\\
	&=&\vmI'_{R}(\vec\rho).
	\eeax
	If $\mE$ is a planarized channel, then
	\bex
	\vmI'_{R}[\mE(\vec{\rho})]
	=\min_{\bfp\in S^2}\max_j |\langle \bfr_jT^\dag+\bfc|\bfp\rangle|=0,
	\eex
	since $\dim {\rm span}\{T\bfr_j+\bfc\}\leqslant 2$. Namely $\vmI'_{R}(\vec\rho) \geqslant \vmI'_{R}[\mE(\vec{\rho})]$ for planarized channel $\mE$.

	Suppose that $\vec\rho\subseteq\vec\varrho$. Then for every $\bfp\in S^2$,
	\bex
	\max_{\rho_j\in\vec\rho}|\langle \bfr_j|\bfp\rangle|
	\le
	\max_{\rho_j\in\vec\varrho}|\langle \bfr_j|\bfp\rangle|.
	\eex
	Taking the minimum over $\bfp\in S^2$, we have
	\bex
	\vmI'_{R}(\vec\rho)\leqslant \vmI'_{R}(\vec\varrho).
	\eex

	Let $\vec{\rho}=\{\rho_1,\rho_2,\rho_3\}$ and $\vec{\sigma}=\{\rho_4,\rho_5,\rho_6\}$ with the Bloch vectors are $\bfr_1=(1,0,0)$, $\bfr_2=(0,1,0)$, $\bfr_3=(0,0,1)$, $\bfr_4=(\frac12,0,0)$, $\bfr_5=(0,\frac12,0)$, and $\bfr_6=(0,0,\frac12)$, respectively. Clearly, $\vec{\rho}\cap \vec{\sigma}=\emptyset$. For $\bfp=(p_x,p_y,p_z)\in S^2$, one has
	\bex
	\max_{1\leqslant j\leqslant 3}|\langle \bfr_j|\bfp\rangle|
	=\max\{|p_x|,|p_y|,|p_z|\}.
	\eex
	Since
	\[
	\max\{|p_x|, |p_y|, |p_z|\} \geqslant\sqrt{\frac{p_x^2 + p_y^2 + p_z^2}{3}} = \frac{1}{\sqrt{3}},
	\]
	taking $\bfp = \frac{1}{\sqrt{3}}(1, 1, 1)$, we obtain
	\bex
	\vmI'_{R}(\vec\rho)=\frac1{\sqrt3}, ~ \vmI'_{R}(\vec\sigma)=\frac1{2\sqrt3}.
	\eex
	Now consider
	$\vec{\rho}\cup \vec{\sigma}=\{\rho_1,\rho_2,\rho_3,\rho_4,\rho_5,\rho_6\}$, we find
	\bex
	\vmI'_{R}(\vec\rho\cup\vec\sigma)
	=\frac1{\sqrt3}.
	\eex
	Consequently,
	\bex
	\vmI'_{R}(\vec\rho)+\vmI'_{R}(\vec\sigma)
	=\frac1{\sqrt3}+\frac1{2\sqrt3}
	>\frac1{\sqrt3}
	=\vmI'_{R}(\vec\rho\cup\vec\sigma).
	\eex
	
	Let $\vec{\rho}=\{\rho_1,\rho_2,\rho_3\}$ and $\vec{\sigma}=\{\rho_1,\rho_4,\rho_5\}$ with the Bloch vectors are $\bfr_1=(1,0,0)$, $\bfr_2=(0,1,0)$, $\bfr_3=(0,0,1)$, $\bfr_4=(0,\frac12,0)$, and $\bfr_5=(0,0,\frac12)$, respectively. Obviously $\vmI'_{R}(\vec\rho\cap\vec\sigma)=0$.
	Taking $\bfp=\left(\frac13,\frac23,\frac23\right)$, we get
	\bex
	\vmI'_{R}(\vec\sigma)=\frac13.
	\eex
	Now consider $\vec{\rho}\cup \vec{\sigma}=\{\rho_1,\rho_2,\rho_3,\rho_4,\rho_5\}$,
	\beax
	&&\max_j |\langle \bfr_j,p\rangle|\\
	&=&\max\left\{ |p_x|, |p_y|, |p_z|, \frac{1}{2}|p_y|, \frac{1}{2}|p_z| \right\}\\
	&=&\max\left\{ |p_x|, |p_y|, |p_z| \right\}.
	\eeax
	Therefore, $\vmI'_{R}(\vec\rho\cup\vec\sigma)=\frac1{\sqrt3}$, and
	\[
	\vmI'_{R}(\vec\rho)+\vmI'_{R}(\vec\sigma)
	=\frac1{\sqrt3}+\frac13
	>\frac1{\sqrt3}
	=\vmI'_{R}(\vec\rho\cup\vec\sigma)	+\vmI'_{R}(\vec\rho\cap\vec\sigma).
	\]
	We thus proved that $\vmI'_{R}$ satisfies items (SI3) and (SI4) but violates items (SI5) and (SI6).	
\end{proof}

By the arguments above, the multiplier factor ``$1/n$'' in the expression of $\vmI_{R}$ seems the main cause of the violation of items (SI4)--(SI6).  We thus improve $\vmI_{R}$ as follows.

\begin{definition}
For any $\vec{\rho}\subset\mS(d)$, we define
\be
\vmI''_R(\vec{\rho})=\min_U\sum_{j=1}^{n}\mI_R(U\rho_jU^\dagger)
\ee
with respect to any given reference basis.
\end{definition}

\begin{pro}
For the qubit system, $\vmI''_R$ is a complete SIM but it is not strongly superadditive.
\end{pro}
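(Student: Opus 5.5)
Following the same route as for $\vmI_R$ and $\vmI'_R$, I will first rewrite $\vmI''_R$ in Bloch form. By the formula quoted from Ref.~\cite{LiMS2026pra} (the qubit robustness of imaginarity of $U\rho U^\dagger$ is the absolute component of the rotated Bloch vector along the imaginary axis), we have
\be
\vmI''_R(\vec\rho)=\min_{\bfp\in S^2}\sum_{j=1}^n|\langle\bfr_j|\bfp\rangle| .
\ee
This is just $n\,\vmI_R(\vec\rho)$.

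Items (SI1)--(SI3) then follow from Proposition~\ref{pro2} after multiplying by $n$. A set imaginarity free set has all Bloch vectors in a plane through the origin, so choosing $\bfp$ normal to that plane gives zero. Unitary invariance follows because a simultaneous unitary rotates all $\bfr_j$ by the same $O\in SO(3)$, which can be absorbed into $\bfp$. For unital channels with $\rank(T)=3$, the same estimate $|\langle\bfr_j|T^\dagger\bfp\rangle|\le|\langle\bfr_j|\bfp\rangle|$ is used, with the cardinality $n$ unchanged. For planarized channels, and for unital channels with $\rank(T)\le2$, the output is coplanar through the origin and the value is $0$. For the unital case with $\rank(T)=3$, I would spell out the step cleanly. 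Given the minimizer $\bfp$ for $\vec\rho$, set $\bfq=T\bfp/\|T\bfp\|$, working with $T$ acting on the $\bfp$ side. Then
\[
\sum_j|\langle T\bfr_j|\bfq\rangle|=\|T^{\!\top}\bfq\|\cdots,
\]
and it is simplest to write $\langle T\bfr_j,\bfq\rangle=\langle\bfr_j,T^\top\bfq\rangle$ with $\|T^\top\bfq\|\le1$. Choose $\bfq$ as the unit vector with $T^\top\bfq=\lambda\bfp$, where $\lambda=1/\|(T^\top)^{-1}\bfp\|\le1$. This gives monotonicity.

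For (SI4), the key point is that every summand is nonnegative. For any fixed $\bfp$, the sum over $\vec\rho$ is at most the sum over $\vec\varrho\supseteq\vec\rho$; minimizing the right side and then the left gives $\vmI''_R(\vec\rho)\le\vmI''_R(\vec\varrho)$. For (SI5) with $\vec\rho\cap\vec\sigma=\emptyset$, let $\bfp^*$ be the minimizer for $\vec\rho\cup\vec\sigma$. The objective splits as
\[
\sum_{\vec\rho}|\langle\bfr_j|\bfp^*\rangle|+\sum_{\vec\sigma}|\langle\bfr_j|\bfp^*\rangle|\ge\vmI''_R(\vec\rho)+\vmI''_R(\vec\sigma),
\]
since each piece dominates its own minimum. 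This is the essential reason the factor $1/n$ had to be dropped. The minimum exists because $S^2$ is compact and the objective is continuous.

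To violate (SI6), the subadditive-type failure must come from the overlap being counted twice at a single common $\bfp$ while the separate optima use different directions. I would try $\vec\rho=\{\rho_1,\rho_2,\rho_3\}$ and $\vec\sigma=\{\rho_1,\rho_2,\rho_4\}$ with $\bfr_1=(1,0,0)$, $\bfr_2=(0,1,0)$, $\bfr_3=(0,0,1)$, $\bfr_4=(0,0,-1)$ or some other vector. Then $\vec\rho\cap\vec\sigma=\{\rho_1,\rho_2\}$ has value $0$ (take $\bfp=\bfe_z$), which is the crucial feature. Next I would compute $\vmI''_R(\vec\rho)=1$ from $|p_x|+|p_y|+|p_z|\ge1$. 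The task is to pick $\bfr_4$ so that $\vmI''_R(\vec\sigma)+1>\vmI''_R(\vec\rho\cup\vec\sigma)$. Since the overlap is planar, a better choice is a large overlap that is \emph{coplanar}, say many states in the $xy$-plane, together with $\bfr_3$ and $\bfr_4$ tilted. For example, take $\vec\rho\cap\vec\sigma=\{\rho_1,\rho_2\}$ with $\bfr_1=(1,0,0)$, $\bfr_2=(0,1,0)$, plus $\bfr_3=(0,0,1)$ in $\vec\rho$ and $\bfr_4=(0,0,1)$ with a different purity, e.g.\ $(0,0,\tfrac12)$, in $\vec\sigma$. Then $\vmI''_R(\vec\rho)=1$ and $\vmI''_R(\vec\sigma)=\tfrac12$ (lower bound from $|p_x|+|p_y|+\tfrac12|p_z|\ge\tfrac12$, attained at $\bfp=\bfe_z$). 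For the union, $\bfp=\bfe_z$ gives $\tfrac32$, whereas $\bfp=\bfe_x$ gives $1$. Hence $\vmI''_R(\vec\rho\cup\vec\sigma)\le1<\tfrac32=\vmI''_R(\vec\rho)+\vmI''_R(\vec\sigma)-0$, and (SI6) fails. The only real obstacle is certifying the exact minima. This follows from the elementary bounds $|p_x|+|p_y|+c|p_z|\ge\min(1,c)$, which hold because a linear-in-$|p_i|$ function on the unit sphere is minimized at a coordinate axis.
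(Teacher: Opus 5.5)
Your proposal is correct and follows essentially the paper's route: you reduce $\vmI''_R$ to $\min_{\bfp\in S^2}\sum_j|\langle\bfr_j|\bfp\rangle|$, obtain (SI5) by splitting the minimum, and refute (SI6) using $\vec\rho,\vec\sigma$ that share $\bfr_1=(1,0,0)$ and $\bfr_2=(0,1,0)$. The differences are cosmetic: the paper takes $\bfr_4=(0,0,-1)$ (giving $2>1$) where you take $(0,0,\tfrac12)$ (giving $\tfrac32>1$), and you spell out (SI4) and the rank-3 unital step via $\bfq\propto(T^{\top})^{-1}\bfp$ more explicitly than the paper does, though you should delete the abandoned $\bfq=T\bfp/\|T\bfp\|$ line.
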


\begin{proof}
	By definition, together with the arguments in Proposition~\ref{pro2}, we only need to check items (SI5) and (SI6) in Definition~\ref{SIM}. Let $\vec{\rho}$ and $\vec{\sigma}$ be two sets of qubit states with $\vec{\rho}\cap\vec{\sigma}=\emptyset$. By definition,
	\beax
	\vmI''_R(\vec{\rho}\cup\vec{\sigma})
	&=&
	\min_{\bfp\in S^2}
	\left(
	\sum_{\rho_j\in \vec{\rho}} |\langle \bfr_{j}|\bfp\rangle|
	+
	\sum_{\rho_j\in \vec{\sigma}} |\langle\bfr_{j} |\bfp\rangle|
	\right)\\
	&\geqslant&
	\min_{\bfp\in S^2}\sum_{\rho_j\in \vec{\rho}} |\langle\bfr_{j}|\bfp\rangle|
	+
	\min_{\bfp\in S^2}\sum_{\rho_j\in \vec{\sigma}} |\langle\bfr_{j}|\bfp\rangle|\\
	&=&\vmI''_R(\vec{\rho})+\vmI''_R(\vec{\sigma}).
	\eeax
	That is, $\vmI''_R$ is complete.
	
	Taking
	$\vec\rho=\{\rho_1,\rho_2,\rho_3\}$ and $\vec\sigma=\{\rho_1,\rho_2,\rho_4\}$ with the Bloch vectors are $\bfr_1=(1,0,0)$, $\bfr_2=(0,1,0)$, $\bfr_3=(0,0,1)$, and $\bfr_4=(0,0,-1)$, respectively,
	we have, according to Eq.~\eqref{eq1},
	\begin{align*}
		&\vmI''_R(\vec\rho)=\min_{\bfp\in S^2}\bigl(|p_x|+|p_y|+|p_z|\bigr)=1, \\
		&\vmI''_R(\vec\sigma)=1,~~
		\vmI''_R(\vec\rho\cap\vec\sigma)=0.
	\end{align*}
	If we take $\bfp=(1, 0, 0)$, then, by $|p_x|+|p_y|+2|p_z|\geqslant |p_x|+|p_y|+|p_z|\geqslant 1$, we obtain
	\bex
	\vmI''_R(\vec\rho\cup\vec\sigma)
	=\min_{p\in S^2}\bigl(|p_x|+|p_y|+2|p_z|\bigr)=1,
	\eex
	which leads to
	\bex
	\vmI''_R(\vec\rho)+\vmI''_R(\vec\sigma)
	>
	\vmI''_R(\vec\rho\cup\vec\sigma)+\vmI''_R(\vec\rho\cap\vec\sigma).
	\eex
	So item (SI6) fails.
\end{proof}


\section{Contractive metric induced qubit set imaginarity}


For any metric $\mC$ that is contractive under CPTP maps, Miyazaki and Matsumoto~\cite{Miyazaki2022q} defined mean-distance
\be
\vmI_C(\vec{\rho}) = \min_{U} \frac{1}{n} \sum_{j=1}^{n} \mI_C(U\rho_jU^\dag)
\ee
and max-distance
\be
\vmI'_C(\vec{\rho}) = \min_{U} \max_{j} \mI_C(U\rho_jU^\dag)
\ee
with respect to any given reference basis. We can also define
\be
\vmI''_C(\vec{\rho}) = \min_{U}  \sum_{j=1}^{n} \mI_C(U\rho_jU^\dag)
\ee
which is deduced by removing the factor $\frac{1}{n}$ from $\vmI_C$.

\begin{pro}
$\vmI_C$, $\vmI'_C$ and $\vmI''_C$ are QbSIMs.
\end{pro}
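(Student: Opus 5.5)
We need to verify (SI1)–(SI3) for $\vmI_C$, $\vmI'_C$ and $\vmI''_C$, using the characterization of free operations in Theorem~\ref{theorem1}. Conditions (SI1) and (SI2) are handled directly from the definitions.

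\emph{(SI1).} Suppose $\vec{\rho}$ is set imaginarity free. Then there is a unitary $U$ making every $U\rho_jU^\dag$ real with respect to the reference basis. Each $\mI_C(U\rho_jU^\dag)$ therefore vanishes, and so does the mean, the maximum and the sum.

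\emph{(SI2).} Replacing $\vec{\rho}$ by $V\vec{\rho}V^\dag$ only reparametrizes the minimization, via $U\mapsto UV^\dag$. Hence all three quantities are invariant under common unitaries.

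\emph{(SI3), planarized case.} Let $\mE$ be a planarized operation. Its image $\mE[\mS(2)]$ lies in a plane through $I/2$, i.e.\ on a great-circle disc of the Bloch ball. So $\mE(\vec{\rho})$ is set imaginarity free by Corollary~III.7 of Ref.~\cite{LiMS2026pra}. By (SI1) all three measures vanish on $\mE(\vec{\rho})$, and monotonicity is trivial.

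\emph{(SI3), unital case.} This is the main step. Let $U_0$ be an optimal unitary for $\vec{\rho}$, and let $\sigma_j\in\mR$ be the closest real states to $U_0\rho_jU_0^\dag$. Define the channel $\mathcal{F}(X)=U_0\mE(U_0^\dag XU_0)U_0^\dag$. Then $U_0\mE(\rho_j)U_0^\dag=\mathcal{F}(U_0\rho_jU_0^\dag)$. If $\mathcal{F}$ maps real states to real states, contractivity of $\mC$ gives
\[
\mI_C\bigl(U_0\mE(\rho_j)U_0^\dag\bigr)\leqslant \mC\bigl(\mathcal{F}(U_0\rho_jU_0^\dag),\mathcal{F}(\sigma_j)\bigr)\leqslant \mC(U_0\rho_jU_0^\dag,\sigma_j)=\mI_C(U_0\rho_jU_0^\dag).
\]
Evaluating each measure for $\mE(\vec{\rho})$ at the particular unitary $U_0$ then yields monotonicity, simultaneously for the mean, the maximum and the sum.

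The obstacle is that $\mathcal{F}$ need not preserve $\mR$. In Bloch language, $\mR$ is the disc $r_y=0$, and $\mathcal{F}$ acts as $\bfr\mapsto OTO^{\rm T}\bfr$, where $O\in SO(3)$ is the rotation induced by $U_0$.

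To get around this, I would first use the fact that every unital qubit channel is a mixture of Pauli-type unitaries conjugated by rotations. Concretely, $T=O_1DO_2$ with $O_1,O_2\in SO(3)$ and $D$ diagonal, with $D$ satisfying the Fujiwara--Algoet conditions. Instead of keeping $U_0$ for the output, I would evaluate the output measure at the unitary $V=W_1U_0'$, where the rotations are chosen so that the input's optimal real disc is carried by $O_2$ onto the $xz$-plane. A diagonal $D$ maps the disc $r_y=0$ into itself, and the rotation $O_1$ is then undone by choosing $W_1$ to implement $O_1^{\rm T}$. The composite $V\mE(\cdot)U_0^{\dag}$-conjugated channel is thus diagonal in Bloch form, a Pauli channel, which has real Kraus operators and so preserves $\mR$.

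The delicate point is that the input-side rotation $O_2$ must be absorbed without changing the input value. Since $\mE$ is fixed, $O_2$ is fixed, so I would instead exploit the freedom in the optimal $U$. Specifically, I would show that the infimum over $U$ equals an infimum over planes $\bfp^\perp$. The input value is determined by the distances of the $\bfr_j$ to the plane $\bfp^\perp$. Under $T$, the output plane $(T^{-{\rm T}}\bfp)^\perp$ then has distances no larger, by the contraction argument above applied to the Pauli channel $D$. If the reduction of $\mI_C$ to a function of the distance to the optimal plane fails for a general contractive $\mC$, I would fall back to proving the claim only via this plane-wise contraction, handling each unital $\mE$ by the decomposition $T=O_1DO_2$.
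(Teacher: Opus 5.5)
Your handling of (SI1), (SI2) and the planarized case of (SI3) is fine. Your displayed contraction inequality, which holds for any CPTP map sending $\mR$ into $\mR$, is exactly the right tool. The gap is in the unital case, which is never actually closed.

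Your main route needs the conjugated channel to be a Pauli channel. That requires the input-side rotation $O_2$ in $T=O_1DO_2$ to match the rotation induced by the optimal $U_0$. Both are fixed, one by $\mE$ and the other by $\vec{\rho}$, so this fails in general, as you concede. Your fallback rests on the claim that $\mI_C(U\rho U^\dagger)$ depends only on the distance from $\bfr$ to the plane $\bfp^\perp$. For a general contractive $\mC$ this is false: invariance of $\mI_C$ under the unitaries preserving $\mR$ only reduces it to a function of $(\|\bfr\|,|r_y|)$, and already the relative entropy of imaginarity genuinely depends on $\|\bfr\|$. The last fallback (``plane-wise contraction via $T=O_1DO_2$'') is not specified. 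So, as written, (SI3) for unital $\mE$ is unproved.

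The missing observation is that you need neither a Pauli channel nor the decomposition. You only need the composite map to send $\mR$ into $\mR$, and unitality gives this directly once the output unitary is adapted to $\mE$.

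In Bloch language, $U_0^\dagger\mR U_0$ is the great disc in some plane $P$ through the origin. Since $\bfc=0$, $\mE$ maps it into the linear subspace $T(P)$, of dimension at most two. Pick a unitary $V$ whose rotation carries $T(P)$ into the plane $r_y=0$. Then $\mathcal{F}'(X)=V\mE(U_0^\dagger XU_0)V^\dagger$ is CPTP, satisfies $\mathcal{F}'(\mR)\subseteq\mR$, and gives $\mathcal{F}'(U_0\rho_jU_0^\dagger)=V\mE(\rho_j)V^\dagger$. Your inequality therefore yields $\mI_C(V\mE(\rho_j)V^\dagger)\leqslant\mI_C(U_0\rho_jU_0^\dagger)$ for every $j$ simultaneously. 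Evaluating the mean, max or sum for $\mE(\vec{\rho})$ at $V$ then gives (SI3) for all three measures.

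This is the paper's argument. Your step of evaluating the output at a different unitary was the right instinct, but you over-constrained that unitary by insisting the composite be diagonal in Bloch form.
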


\begin{proof}
	By definition, we only need to check item (SI3) in Definition~\ref{SIM}.
	If $\mE$ is planarized, then the image of the Bloch ball under $\mE$ is contained in an affine plane passing through the origin. Hence, for any set $\vec\rho$, all states $\mE(\rho_j)$ lie in a common plane through the origin in the Bloch sphere. So there exists a unitary $V$ such that this plane is rotated to the fixed real plane associated with the reference basis, namely $V\mE(\rho_j)V^\dagger\in\mR$ for all $j$. Hence $
	\mI_C(V\mE(\rho_j)V^\dagger)=0$ and thus
	\beax
	\vmI_C(\mE(\vec\rho)) \le \vmI_C(\vec\rho).
	\eeax
	
	If $\mE$ is unital, then the set $U^\dagger \mR U=\{U^\dagger\sigma U:\sigma\in\mR\}$ corresponds to a plane through the origin in the Bloch ball. Since $\mE$ is unital, the image of this plane under $\mE$ is again contained in a linear subspace of dimension at most two passing through the origin. Therefore, there exists a unitary $V$ such that
	\bea
	V \mE(U^\dagger \mR U) V^\dagger \subseteq \mR.
	\eea
	Let $\mE'(\rho)=V \mE(U^\dagger \rho U) V^\dagger$. We have
	$\mE'(U\rho_jU^\dagger)=V\mE(\rho_j)V^\dagger$ and $\mE'(\mR)\subseteq\mR$.
	By the monotonicity of $\mI_C$, we get
	\beax
	\mI_C(\mE'(U\rho_jU^\dagger))\le \mI_C(U\rho_jU^\dagger),
	\eeax
	and therefore
	\beax
	\mI_C(V\mE(\rho_j)V^\dagger) \le \mI_C(U\rho_jU^\dagger),~\forall\, j.
	\eeax
	Then we obtain
	\beax
	\vmI_C(\mE(\vec\rho)) \le \vmI_C(\vec\rho).
	\eeax
	Thus $\vmI_C$ satisfies (SI3). This completes the proof.
\end{proof}

Going further, we can easily check that $\vmI'_C$ is unified and $\vmI''_C$ is complete.


\section{Conclusion and discussion}


We have ascertained the free operation of the set imaginarity for the qubit system, from which we established the axiomatic definition of the set-imaginarity measure. Unlike individual-state resource theories, the free operations for set-states exhibit fundamentally distinct behavior. For the higher-dimensional case, the exact form of such a free operation is hard to fix in general. For the qubit system, we developed several measures of set imaginarity, especially the one based on the Bargmann invariants. The robustness-based approach effectively quantifies both individual-state imaginarity and set imaginarity.

The trace distance induced contractive metric had been generalized to mean-distance, sum-distance and max-distance for set imaginarity, which coincide with  $\vmI_{R}$, $\vmI'_{R}$, and $\vmI''_{R}$, respectively. We proved that any measure of imaginarity $\mI_C$ that induced by contractive metric can be converted into a measure of set imaginarity. These induced SIMs possibly have the same properties as $\vmI_{R}$, $\vmI'_{R}$, and $\vmI''_{R}$, respectively. But the detection and quantifying of set imaginarity in the higher-dimension system remains challenging (see Appendix~\ref{c} for more details). We defer this important direction to future studies.

The resource theory of the set-state emerges as a framework beyond conventional nonconvex resource theories of the non-Gaussianity, non-Markovianity, and quantum discord~\cite{Chitambar2019rmp}. Our findings pave the way for future research in these new nonconvex quantum resource theories and advance the understanding of collective imaginarity in quantum world.

\begin{acknowledgements}
Y.G. is supported by the National Natural Science Foundation of China under Grants No.~12471434 and No.~11971277, the Program for Young Talents of Science and Technology in Universities of Inner Mongolia Autonomous Region under Grant No. NJYT25010, the High-Level Talent Research Start-up Fund of Inner Mongolia University under Grant No. 10000-A260015/501, and the Inner Mongolia Autonomous Region Science and Technology Plan Projects under Grant No. 2025KYPT0098. S.D. is supported by the National Natural Science Foundation of China under Grant No.~12271452.
\end{acknowledgements}

\appendix


\section{Coarsening relation of multipartite partition}\label{a}


Hereafter, we denote by $A_1A_2\cdots A_n$ an $n$-partite quantum system with the state space $\mH^{A_1A_2\cdots A_n}=\mathcal{H}^{A_1}\otimes \mathcal{H}^{A_2}\otimes\cdots\otimes\mathcal{H}^{A_n}$, where $\mH^{A_i}$'s are Hilbert spaces with finite dimension, and by $\mS^{X}$ we denote the set of all density operators (or called states) acting on $\mH^{X}$. The superscript or subscript $X$ always denotes the corresponding system. For example, the state in $\mS^{X}$ is denoted by $\rho^X$ (or $\rho_X$ sometimes), and is also denoted by $\rho$ for simplicity whenever the associated system $X$ is clear from the context.
$X_1|X_2| \cdots |X_{k}$ denotes the $k$ partition of $A_1A_2\cdots A_n$ (or subsystem of $A_1A_2\cdots A_n$ sometimes), $k\leqslant  n$. For instance, partition $AB|C|DE$ is a three partition of the five-particle system $ABCDE$ with $X_1=AB$, $X_2=C$ and $X_3=DE$. The case of $k=n$ is just the original $n$-particle system without any other partition, namely, $A_1A_2\cdots A_n$ means $A_1|A_2|\cdots |A_n$. So, in general $k< n$ unless otherwise specified. We denote the set of all the $k$ partitions of $A_1A_2\cdots A_n$ by $\Gamma_k$ as in Ref.~\cite{Guo2024pra}, $2\leqslant  k<n$, i.e., $\Gamma_k=\{\gamma_i\}$, where $\gamma_i=X_{1(i)}|X_{2(i)}|\cdots|X_{k(i)}$.
Let $\gamma$ and $\gamma'$ be two partitions of $A_1A_2\cdots A_n$ or subsystem of $A_1A_2\cdots A_n$, $k\leqslant  n$, $l\leqslant  n$. We denote by~\cite{Guo2022entropy,Guo2024pra,Guo2025arxiv}
\bea
\gamma\succ^a \gamma', ~
\gamma\succ^b \gamma',~
\gamma\succ^c \gamma'
\eea
if $\gamma'$ can be obtained from $\gamma$
by
\begin{itemize}
	\item[(a)] discarding some subsystem(s) of $\gamma$,
	\item[(b)] combining some subsystems of $\gamma$,
	\item[(c)] discarding some subsystem(s) of some subsystem(s) $X_t$ provided that $\gamma=X_1|X_2| \cdots| X_{k}$, $X_{t}=A_{t(1)}A_{t(2)}\cdots A_{t(f(t))}$ with $f(t)\geq2$, $1\leqslant  t\leqslant  k$,
\end{itemize}
respectively. For example,
$A|B|C|D\succ^a A|B|D\succ^a B|D$,
$A|B|C|D\succ^b AC|B|D\succ^b AC|BD$,
$A|BC\succ^c A|B$.
We call $\gamma'$ is coarser than $\gamma$ if
$\gamma'$ can be obtained from $\gamma$
by one or some of the ways in item (a)--item (c), and we denote it by
$\gamma\succ \gamma'$
uniformly.

Furthermore, if $\gamma\succ \gamma'$, then we denote by
\bea \label{Xi}
\Xi(\gamma- \gamma')
\eea
the set of all the partitions that are coarser than $\gamma$ but (i) neither coarser than $\gamma'$ nor the one from which one can derive $\gamma'$ by the coarsening means, and (ii) if it includes some or all subsystems of $\gamma'=Y_1|Y_2| \cdots |Y_{l}$, then all the subsystems $Y_j$'s included are regarded as one subsystem and (iii) if $\gamma'=Y_1|Y_2| \cdots |Y_{l}$ and $\gamma=X_{l}|X_2|\cdots| X_{k}$ with $Y_1|Y_2| \cdots |Y_{l}=X_1|X_2| \cdots|X_{l-1}|X_{l}\cdots X_{k}$, then $\Xi(\gamma- \gamma')$ contains only $X_{l}|\cdots| X_{k}$ and the one coarser than it. For example, $\Xi(A|B|C|D-A|BCD)=$$\{B|C|D$, $B|CD$, $BC|D$, $C|BD$, $B|C$, $C|D$, $B|D\}$.

\section{Complete global multipartite entanglement measure}\label{b}

Recall that, a function $E^{(n)}: \mS^{A_1A_2\cdots A_n}\to\mbb{R}^{+}$ is called a $n$-partite entanglement measure~\cite{Horodecki2009rmp,Hongyan2012pra,Guo2020pra} if it satisfies: $(E1)$ $E^{(n)}(\rho)=0$ if $\rho$ is fully separable; $(E2)$ $E^{(n)}$ cannot increase under $n$-partite LOCC. In addition, $E^{(n)}$ is said to be a $n$-partite entanglement monotone if it is convex and does not increase on average under $n$-partite stochastic LOCC.

Going further, an MEM $E^{(n)}$ is called a {unified}
global multipartite entanglement measure (GlMEM) if it satisfies the unification condition~\cite{Guo2020pra,Guo2024rip,Guo2025arxiv}:

(i) (Superadditivity)
\begin{equation}\label{supadditivity}
\begin{aligned}
&E^{(n)}(A_1A_2\cdots A_k\ot{A_{k+1}\cdots A_n})\\
\geqslant& E^{(k)}({A_1A_2\cdots A_k})+E^{(n-k)}({A_{k+1}\cdots A_n}),
\end{aligned}
\end{equation}
holds for all $\rho^{A_1A_2\cdots A_k}\ot\rho^{A_{k+1}\cdots A_n}\in\mS^{A_1A_2\cdots A_n}$, hereafter $E^{(n)}(X)$ refers to $E^{(n)}(\rho^X)$ and $E^{(1)}=0$ [note here that, in Refs.~\cite{Guo2020pra,Guo2024rip}, the condition in Eq.~\eqref{supadditivity} is restricted to additivity, i.e., $E^{(n)}(A_1A_2\cdots A_k\ot{A_{k+1}\cdots A_n})
=E^{(k)}({A_1A_2\cdots A_k})+E^{(n-k)}({A_{k+1}\cdots A_n})$.
We weakened it to superadditivity which includes the additivity as a special case];

(ii) (Symmetry) $E^{(n)}({A_1A_2\cdots A_n})=E^{(n)}({A_{\pi(1)}A_{\pi(2)}\cdots A_{\pi(n)}})$, for any $\rho\in\mS^{A_1A_2\cdots A_n}$ and any permutation $\pi$;

(iii) (Coarsening monotone)
\bea\label{coarsen}
E^{(k)}(\gamma)\geqslant E^{(l)}(\gamma')
\eea
holds for all $\rho\in\mS^{A_1A_2\cdots A_n}$ whenever $\gamma\succ^a \gamma'$, where $\gamma$ and $\gamma'$ are two partitions of $A_1A_2\cdots A_n$ or subsystem of $A_1A_2\cdots A_n$.

$E^{(n)}$ is called a {complete} GlMEM if it satisfies both the unification condition above and the hierarchy condition~\cite{Guo2020pra}:

(iv) (Tight coarsening monotone) Equation~\eqref{coarsen} holds for all $\rho\in\mS^{A_1A_2\cdots A_n}$ whenever $\gamma\succ^b \gamma'$.

By definition, a unified MEM $E^{(n)}$ in fact refers to a family of measures $\{E^{(k)}: 2\leqslant k\leqslant n\}$. For example, if $E^{(n)}$ is unified, and we consider the tripartite system, then for any state, $E^{(3)}(ABC)\geqslant E^{(2)}(AB)$, and if it is complete, then $E^{(3)}(ABC)\geqslant E^{(2)}(A|BC)$. Not all MEMs are unified and some unified GlMEMs are not complete~\cite{Guo2024rip}.


\section{Set imaginarity of qudit states}\label{c}


Let $\vec{\rho}\subset\mS(3)$ with $|\vec{\rho}|=4$ and $\rho_i=|\psi_i\ra\la\psi_i|$, where
\bax
|\psi_1\ra&=|0\ra,~|\psi_2\ra=\frac{1}{\sqrt2}(|0\ra+|1\ra),\\
|\psi_3\ra&=\frac{1}{\sqrt2}(|1\ra+|2\ra),~|\psi_4\ra=\frac{1}{\sqrt3}(|0\ra-|1\ra+\rmi|2\ra).
\eax
Then we can easily check that all the third-order Bargmann invariants are zero but $\tr(\rho_1\rho_2\rho_3\rho_4)=-1+\rmi$. Namely, the third-order Bargmann invariants cannot detect the imaginarity in $\vec{\rho}$. But for some special set $\vec{\rho}$, we can detect the set imaginarity by the third-order Bargmann invariants. Let $\{\ket{\psi_i}\}_{i=1}^n\subset\mH$ ($\dim\mH=d\geqslant2$) be a family of pure states whose support graph is complete, i.e.,
\[
\la\psi_i|\psi_j\ra\neq 0
\quad
\text{for all }i\neq j.
\]
Then the $k$th order Bargmann invariant of $\{\rho_i=|\psi_i\ra\la\psi_i|\}_{i=1}^k$ satisfies
\bea
&&\Delta_k(\rho_1\rho_2\cdots\rho_k)\nonumber \\
&=&
\frac{
	\Delta_3(\rho_1,\rho_2,\rho_3)\,
	\Delta_3(\rho_1,\rho_3,\rho_4)\,
	\cdots\,
	\Delta_3(\rho_1,\rho_{k\text{-}1},\rho_k)
}{
	\prod_{m=3}^{k-1}
	|\la\psi_{1}|\psi_{m}\ra|^2
}.~~~~~~~
\eea
That is, in such a case, if only one or three third-order Bargmann invariants are not real, then the set has set imaginarity.

We can represent any bipartite state $\rho\in\mS(d)$ in the Bloch form (also known as Fano form~\cite{Fano1983rmp}) by
\beax
\rho &=& \frac{1}{d}\left(  I
+ \sum_{i=1}^{d^2-1} r_{i} \frac{\lambda_{i}}{\sqrt{2}}\right),
\eeax
where
\beax
r_{i} = \tr\left(\rho\frac{\lambda_{i}}{\sqrt{2}}\right),
\eeax
$\{\lambda_i\}_{i=1}^{d^2-1}$ is the
traceless Hermitian generators of $SU(d)$. 
\if false Note here that $\{\lambda_i\}=\{w_l,u_{jk},v_{jk}\}$
can be constructed from any orthonormal basis in $\mH$ \cite{suppl-Hioe1981prl},
\begin{align*}
	&w_l=\sqrt{\frac{2}{(l+1)(l+2)}}\left(\sum_{i=0}^l|i\rangle\langle
	i|-(l+1)|l+1\rangle\langle l+1|\right),\nonumber\\
	&u_{jk}=|j\rangle\langle k|+|k\rangle\langle j|,\nonumber\\
	&v_{jk}=-\rmi|j\rangle\langle k|-|k\rangle\langle j|,\nonumber
\end{align*}
where $0\leqslant l\leqslant d-2$ and $0\leqslant j<k\leqslant d-1$.
\fi
In such a sense, any $\rho\in\mS(d)$ corresponds to a $d^2-1$-dimensional real vector $\bfr$ which is called the generalized Bloch vector of $\rho$. For any given $\vec{\rho}\in\mS(d)$ with $|\vec{\rho}|=n$, Li \emph{et al}. define the associated Gram matrix by~\cite{LiMS2026pra}
\beax
G_{\mathbf{r}(\vec{\rho})} := \left[ \langle \mathbf{r}_k|\mathbf{r}_l \rangle\right],
\eeax
where $\bfr_k$ is the generalized Bloch vector of $\rho_k$, and it was shown that,
if $\vec{\rho}$ is set imaginarity free, then $\operatorname{rank}(G_{\mathbf{r}(\rho)}) \leqslant \frac{d(d+1)}{2} - 1$.

From the argument above, we give the following conjecture.

\begin{conjecture}
	Let $\vec{\rho}=\{\rho_j\}_{j=1}^n\subset \mS(d)$ be a set of qudit states. Then $\vec{\rho}$ is set imaginarity free if and only if there exist $i_1$, $i_2$, $\dots$, $i_k$, $3\leqslant k\leqslant d(d+1)/2$, such that
	\bex
	\rIm\tr(\rho_{i_1}\rho_{i_2}\cdots\rho_{i_k})\neq0.
	\eex
\end{conjecture}



\end{document}